\DeclareMathOperator*{\argmax}{argmax}
\DeclareMathOperator*{\E}{E}
\newtheorem{theorem}{Theorem}
\newtheorem{corollary}[theorem]{Corollary} 
\newtheorem{example}{Example}
\newcommand{\n}{v}
\newcommand{\ns}{V}
\newcommand{\actsn}[1][\n]{A_{#1}}
\newcommand{\actin}[2][\n]{{#1}^{#2}}
\newcommand{\infls}{\mathcal{F}}
\newcommand{\infln}[1][\n]{f_{#1}}
\newcommand{\tn}[1][\n]{\theta_{#1}}
\newcommand{\ts}{\Theta}
\newcommand{\pset}[1]{2^{#1}}
\newcommand{\wgt}{w}
\newcommand{\obj}{\sigma}
\newcommand{\xs}{\mathbf{x}}
\newcommand{\xn}[1][\n]{x_{#1}}
\newcommand{\ys}{\mathbf{y}}
\newcommand{\yn}[1][\n]{y_{#1}}
\newcommand{\wgtM}{\hat{\wgt}}
\newcommand{\inflsM}{\hat{\infls}}
\newcommand{\inflnM}[1][\nM]{\hat{f}_{#1}}
\newcommand{\nsM}{\hat{\ns}}
\newcommand{\nM}[1][\n]{\hat{#1}}
\newcommand{\objM}{\hat{\obj}}
\newcommand{\disc}{\delta}
\newcommand{\Disc}{\Delta}
\newcommand{\unitn}[1][\n]{\boldsymbol{\disc}_{#1}}
\newcommand{\unif}{\mathcal{U}}
\newcommand{\abs}[1]{\lvert{#1}\rvert}
\newcommand{\inst}{\mathcal{I}}
\newcommand{\instM}{\hat{\inst}}
\newcommand{\eps}{\varepsilon}
\let\epsilon=\varepsilon
\newcommand{\norm}[2][]{{\lVert{#2}\rVert}_{#1}}
 \gdef\xxxmark{%
   \expandafter\ifx\csname @mpargs\endcsname\relax 
     \expandafter\ifx\csname @captype\endcsname\relax 
       \marginpar{xxx}
     \else
       xxx 
     \fi
   \else
     xxx 
   \fi}
 \gdef\xxx{\@ifnextchar[\xxx@lab\xxx@nolab}
 \long\gdef\xxx@lab[#1]#2{{\bf [\xxxmark #2 ---{\sc #1}]}}
 \long\gdef\xxx@nolab#1{{\bf [\xxxmark #1]}}
\newif\ifabstract
\newif\iffull
\begin{document}


\title{How to Influence People with Partial Incentives%
\thanks{This work was supported in part by NSF CAREER award 1053605, NSF grant CCF-1161626, ONR YIP award N000141110662, DARPA/AFOSR grant FA9550-12-1-0423.}}

\ifabstract
\numberofauthors{6} 
\fi
%
\author{
  Erik D. Demaine\thanks{MIT, Cambridge, MA, USA. \tt{edemaine@mit.edu}.}
  \and
  MohammadTaghi Hajiaghayi\thanks{University of Maryland, College Park, MD, USA. \tt{hajiagha@cs.umd.edu}.}
  \and
  Hamid Mahini\thanks{University of Maryland, College Park, MD, USA. \tt{hamid.mahini@gmail.com}.}
  \and  
  David L. Malec\thanks{University of Maryland, College Park, MD, USA. \tt{dmalec@umiacs.umd.edu}.}
  \and
  S. Raghavan\thanks{University of Maryland, College Park, MD, USA. \tt{raghavan@umd.edu}.}
  \and
  Anshul Sawant\thanks{University of Maryland, College Park, MD, USA. \tt{asawant@cs.umd.edu}.}
  \and
  Morteza Zadimoghadam\thanks{MIT, Cambridge, MA, USA. \tt{morteza@csail.mit.edu}}
}

\maketitle
\begin{abstract}
We study the power of fractional allocations of resources
to maximize our influence in a network.  This work extends in a natural way
the well-studied model by Kempe, Kleinberg, and Tardos (2003),
where a designer selects a (small) seed set of nodes in a social network to
influence directly, this influence cascades when other nodes reach
certain thresholds of neighbor influence, and the goal is to maximize
the final number of influenced nodes.
Despite extensive study from both practical and theoretical viewpoints,
this model limits the designer to a binary choice for each node,
with no chance to apply intermediate levels of influence.
This model captures some settings precisely, such as exposure to
an idea or pathogen, but it fails to capture very relevant concerns in others,
for example, a manufacturer promoting a new product by distributing
five ``20\% off'' coupons instead of giving away a single free product.

While fractional versions of problems tend to be easier to solve
than integral versions, for influence maximization, we show that the
two versions have essentially the same computational complexity.
On the other hand, the two versions can have vastly different solutions:
the added flexibility of fractional allocation can lead to significantly improved
influence.  Our main theoretical contribution is to show how to adapt the
major positive results from the integral case to the fractional case.
Specifically,
Mossel and Roch (2006) used the submodularity of influence to obtain their
integral results; we introduce a new notion of \emph{continuous submodularity},
and use this to obtain matching fractional results.
We conclude that we can achieve the same greedy $(1-1/e-\eps)$-approximation for the fractional case as the integral case, and that other heuristics are likely to carry over as well.  In practice, we find that the fractional model
performs substantially better than the integral model, according to
simulations on real-world social network data.

\end{abstract}
\thispagestyle{empty}

\newpage
\setcounter{page}{1}

\section{Introduction}
The ideas we are exposed to and the choices we make are significantly
influenced by our social context.  It has long been studied how our social
network (i.e., who we interact with) impacts the choices we make,
and how ideas and behaviors can spread through social
networks~\cite{Bikhchandani1992, Granovetter1978, Schelling2006, Valente1995}.
With websites such as Facebook and
Google+ devoted to the forming and maintaining of social networks,
this effect becomes ever more evident.  Individuals are linked together
more explicitly and measurably, making it both easier and more important
to understand how social networks affect the behaviors and actions
that spread through a society.

A key problem in this area is to understand how such a
behavioral cascade can start.  For example, if a company wants to
introduce a new product but has a limited promotional budget, it
becomes critical to understand how to target their promotional efforts
in order to generate awareness among as many people as possible.  A
well-studied model for this is the Influence Maximization problem,
introduced by Kempe, Kleinberg, and Tardos \cite{kkt03}.
The problem's objective is to find a small set of individuals to influence,
such that this influence will cascade and grow through the social network
to the maximum extent possible.
For example, if a company wants to introduce a new piece of
software, and believes that friends of users are likely to become users
themselves, how should they allocate free copies of their software in
order to maximize the size of their eventual user base?

Since the introduction of the Influence Maximization problem \cite{kkt03},
there has been a great deal of interest and follow-up work in the model.
While Kempe et al.~\cite{kkt03} give
a greedy algorithm for approximating the Influence Maximization
problem, it requires costly simulation at every step; thus, while
their solution provides a good benchmark, a key area of research has
been on finding practical, fast algorithms that themselves provide
good approximations to the greedy algorithm~\cite{Borgs2012, Chen2010, Chen2009a, Chen2010a, Leskovec2007}.
The practical, applied nature of the motivating settings means that
even small gains in performance (either runtime or approximation
factor) are critical, especially on large, real-world instances.

We believe that the standard formulation of the Influence Maximization
problem, however, misses a critical aspect of practical applications.
In particular, it forces a binary choice upon the optimizer, forcing
a choice of either zero or complete influence on each individual,
with no options in between.  While this is reasonable for some settings ---
e.g., exposure to an idea or pathogen -- it is far less reasonable for
other settings of practical importance.  For example, a company promoting a
new product may find that giving away ten free copies is far less
effective than offering a discount of ten percent to a hundred people.
We propose a \emph{fractional} version of the problem where the
optimizer has the freedom to split influence across individuals
as they see fit.

To make this concrete, consider the following problem an optimizer
might face.  Say that an optimizer feels there is some small,
well-connected group whose adoption of their product is critical to
success, but only has enough promotion budget remaining to influence
one third of the group directly.  In the original version of Influence
Maximization, the optimizer is forced to decide which third of the
group to focus on.  We believe it is more natural to assume they have
the flexibility to try applying uniform influence to the group, say
offering everyone a discount of one third on the price of their
product, or in fact any combination of these two approaches.  While
our results are preliminary, we feel that our proposed model addresses
some very real concerns with practical applications of Influence Maximization,
and offers many opportunities for important future research.


\subsection{Our Results and Techniques}
This work aims to understand how our proposed
fractional version of the Influence Maximization problem differs from
the integral version proposed by Kempe, Kleinberg, and Tardos \cite{kkt03}.
We consider this question from both a theoretical and an empirical perspective.
On the theoretical side, we show that, unlike many problems, the fractional
version appears to retain essentially the same computational hardness as the
integral version.  The problems are very different, however:
we give examples where the objective values for the fractional and integral
versions differ significantly.  Nonetheless, we are able to carry over the
main positive results to the fractional setting, notably that the objective function is submodular and the problem therefore admits a greedy $(1-1/e-\eps)$-approximation.
On the empirical side, we simulate the main algorithms and heuristics on
real-world social network data, and find that the computed solutions are
substantially more efficient in the fractional setting.

Our main theoretical result shows that the positive results of
Mossel and Roch~\cite{mr10} extend to our proposed fractional model.  Their result
states that, in the integral case, when influence between individuals is
submodular, so too is the objective function in Influence
Maximization.  We show that, for a continuous analog of
submodularity,\footnote{Note that our notion of continuous submodularity is
  neither of the two most common continuous extensions of submodularity,
  namely the multilinear and Lov\'{a}sz extensions.}
the same results holds for our fractional
case.  First we consider a discretized version of the fractional
Influence Maximization Problem, where each vertex can be assigned a
weight that is a multiple of some discretization parameter $\epsilon =
\frac{1}{N}$.  Then we consider the final influenced set by choosing a
weighted seed set $S$, where the weight of each element is a multiple
of~$\eps$.
We show that the fractional Influence Maximization objective is a
submodular function of $S$ for any $N\geq 1$ (Theorem~\ref{thm:disc}).
We further extend this result to the fully continuous case
(Theorem~\ref{thm:cont}).  Note that this result does not follow
simply by relating the fractional objective function to the integral
objective and interpolating, or other similar methods; instead, we need to
use a nontrivial reduction to the generalization of the influence
maximization problem given by Mossel and Roch~\cite{mr10}.  Not only does this result
show that
our problem admits a greedy $(1-1/e-\eps)$-appromixation algorithm,  it furthermore gives us hope that we can readily adapt the
large body of work on efficient heuristics for the integral case to
our problem and achieve good results.

In addition to showing the submodularity of the objective persists
from the integral case to the fractional case, we show that the
hardness of the integral case persists as well.  In the case of fixed
thresholds, we show that all of the hardness results of Kempe et al.~\cite{kkt03}
extend readily to the fractional case.
Specifically, we show that, for the fractional version of the linear
influence model, even finding an $n^{1-\epsilon}$-approximation is
NP-hard. First we prove NP-hardness of the problem by
a reduction from  Independent Set (Theorem~\ref{thm:fixed_nphard}),
and then we strengthen the result to prove inapproximability
(Corollary~\ref{cor:apphard}). In addition, when thresholds are
assumed to be independent and uniformly distributed in $[0,1]$, we show
that it is NP-hard to achieve better than a $(1-1/e)$-approximation
in the Triggering model introduced by Kempe et al.~\cite{kkt03}.
This holds even for the simple case where triggering sets are
deterministic and have constant sizes, and shows that even for this
simple case the greedy approximation is tight, just as in the integral
case.  An important aspect of all of these reductions is that they use
very simple directed acyclic graphs (DAGs), with only two or three layers of vertices.

Our last set of results focuses on the special case where the network is
a DAG.  Here, we focus
on the linear influence model with uniform thresholds.  In this case, we see that we
can easily compute the expected influence from any single node via
dynamic programming; this closely resembles a previous result for the
integral case \cite{Chen2010a}.  In the fractional case, this gives us
a sort of linearity result.  Namely, if we are careful to avoid
interference between the influences we place on nodes, we can conclude
that the objective is essentially linear in the seed set.  While the
conditions on this theorem seem strong at first glance, it has a very
powerful implication: all of the hardness results we presented
involved choosing optimal seed sets from among the sources in a DAG,
and this theorem says that with uniform thresholds the greedy
algorithm finds the \emph{optimal} such seed set.


\subsection{Related Work}
Economics, sociology, and political science have all studied and modeled
behaviors arising from information and influence cascades in social networks.
Some of the earliest models were proposed by Granovetter~\cite{Granovetter1978} and
Schelling~\cite{Schelling2006}. Since then, many such models have been studied and
proposed in the literature \cite{Bikhchandani1992, Rogers2010, Valente1995}.

The advent of social networking platforms such as Facebook, Twitter, and
Flickr has provided researchers with unprecedented data about social
interactions, albeit in a virtual setting. The question of monetizing this
data is critically important for the entities that provide these platforms and
the entities that want to leverage this data to engineer effective marketing
campaigns. These two factors have generated huge interest in algorithmic
aspects of these systems.

A question of central importance is to recognize ``important individuals'' in
a social network. Domingos and Richardson~\cite{Domingos2001, Richardson2002}
were the first to propose heuristics for selection of customers on a network
for marketing. This work focuses on evaluating customers based on their
intrinsic and network value. The network value is assumed to be generated by a
customer influencing other customers in her social network to buy the product.
In a seminal paper, Kempe et al.~\cite{kkt03} give an approximation algorithm for
selection of influential nodes under the linear threshold (LT) model.
Mossel and Roch~\cite{Mossel2007} generalized the results of Kempe et al.~\cite{kkt03} to cases where the
activation functions are monotone and submodular. Gunnec and
Raghavan~\cite{GRpap1} were the first to discuss fractional incentives (they
refer to these as partial incentives/inducements) in the context of a product
design problem. They consider a fractional version of the target set selection
problem (i.e., fixed thresholds, fractional incentives, a linear influence
model, with the goal of minimizing the fractional incentives paid out so that
all nodes in the graph are influenced). They provide an integer programming
model, and show that when the neighbors of a node have equal influence on it,
the problem is polynomially solvable via a greedy algorithm \cite{GRpap1,
Gunnecthesis, GRZpap2}.
 
Some recent work has directly tackled the question of revenue maximization in
social networks by leveraging differential pricing to monetize positive
externalities arising due to adoption of a product by neighbors of a
customer~\cite{akhlaghpour2010optimal, arthur2009pricing, Ehsani2012118,
hartline2008optimal}.  The closest such work is by Singer~\cite{Singer12}, but it still restricts the planner's direct influence to initial adoption. Other work has focused on finding faster algorithms for
the target set selection problem~\cite{Chen2010, Chen2009a, Chen2010a,
Leskovec2007}. A very recent theoretical result in this direction is an
$O(\frac{(m+n)\log n}{\epsilon^3})$ algorithm giving an approximation
guarantee of $1 - \frac{1}{e} - \epsilon$~\cite{Borgs2012}. While
Leskovec et al.~\cite{Leskovec2007} do not compare their algorithm directly with the greedy
algorithm of Kempe et al.~\cite{kkt03}, the heuristics in other papers \cite{Chen2010,
Chen2009a, Chen2010a} approach the performance of the greedy algorithm quite
closely. For example, in \cite{Chen2010}, the proposed heuristic achieves an
influence spread of approximately 95\% of the influence spread achieved by the
greedy algorithm. An interesting fact on the flip side is that none of the
heuristics beat the greedy algorithm (which itself is a heuristic) for even a
single data set.


\section{Model}
\label{sec:model}
\newcommand{\ws}{{\mathbf w}}
\newcommand{\w}[1]{{w}_{#1}}
\newcommand{\nbdin}{{\delta^{^{-}}}}
\newcommand{\nbdout}{{\delta^{^+}}}
\newcommand{\Infl}{I}
\newcommand{\Sat}{S}

\paragraph*{\bf Integral Influence Model} 
We begin by describing the model used for propagation of influence in
social networks used by Mossel and Roch~\cite{mr10}; it captures the model of Kempe et al.~\cite{kkt03} as a special case.  While the latter described the spread of influence
in terms of an explicit network, the former leaves the underlying
social network implicit.  In this model, the social network is given by a vertex set
$\ns$ and an explicit description of how vertices influence each
other. For each vertex $\n\in\ns$, we are given a
function $\infln:\pset{\ns}\rightarrow[0,1]$ specifying the amount of
influence each subset $S\subseteq\ns$ exerts on $v$.  We denote the
set of all influence functions by $\infls=\{\infln\}_{\n\in\ns}$.

Given a social network specified by $(\ns,\infls)$, we want to understand how
influence propagates in this network.  The spread of influence is
modeled by a process that runs in discrete stages.  In addition to the
influence function $\infln$, each vertex $\n$ has a threshold
$\tn\in[0,1]$ representing how resistant it is to being influenced.
If, at a given stage, the currently activated set of vertices is $S\subseteq\ns$,
then any unactivated $\n\in\ns\setminus S$ becomes activated in the
next stage if and only if $\infln(S)\ge\tn$.  Our goal is to understand 
how much influence different sets of vertices exert on the social network 
as a whole under this process; we can measure this by running this process 
to completion starting with a particular {\em seed set}, and seeing 
how large the final activated set is.
In some settings, we may value activating certain (sets of) vertices more 
highly, and to capture this  we define a weight
function $\wgt:\pset{\ns}\rightarrow\mathbb{R}_{+}$ on subsets of
$\ns$.  We now define the value of a seed set $S$ as
follows.  For an initially activated set $S_0$, let
$S_1^\ts,S_2^\ts,\dots,S_n^\ts$ be the activated sets after
$1,2,\dots,n=\abs{\ns}$ stages of our spreading process, when
$\ts=(\tn)_{\n\in\ns}$ is our vector of thresholds. Our goal is 
understanding the value of $\wgt(S_{n}^\ts)$ when we set $S_0=S$.
Note this depends strongly on $\ts$: the exact values of thresholds
have a significant impact on the final activated set.  If the vector 
$\ts$ can be arbitrary, finding the best seed set -- 
or even any nontrivial approximation of it -- becomes NP-Hard 
(see Section~\ref{sec:hard} for discussion and proofs of this).  
Thus, we follow the lead of Kempe et al.~\cite{kkt03} and assume that
each threshold is independently distributed as $\tn\sim\unif[0,1]$.
Then, our goal in this problem is understanding the structure of the
 function $\obj:\pset{\ns}\rightarrow\mathbb{R}_{+}$ given by
\begin{equation*}
  \obj(S) = \E_{\ts}[\;\wgt(S_{n}^{\ts})\;\vert\; S_0 = S\;],
\end{equation*}
with the goal of finding seed sets $S$ maximizing $\obj(S)$.

\paragraph*{\bf Fractional Influence Model}
A major shortcoming of the model described above is that it 
isolates the effects of influence directly applied by 
the optimizer from those of influence cascading from 
other individuals in the network.  In particular, note that every
individual in the social network is either explicitly activated by the
optimizer (and influence from their neighbors has no effect), or is
activated by influence from other individuals with no (direct) involvement
from the optimizer.  This separation is artificial, however, and in 
practical settings a clever optimizer could try to take advantage of 
known influences between the individuals they wish to affect.  For 
example, if an optimizer is already planning to activate some set 
$S$ of individuals, it should require notably less effort to ensure 
activation of any individual who is heavily influenced by the set $S$.

We propose the following modification of the previously described
influence model in order to capture this phenomenon. Rather than
selecting a set $S$ of nodes to activate, the optimizer specifies a
vector $\xs\in[0,1]^n$ indexed by V, where $\xn$ indicates the amount
of direct influence we apply to $\n$. We assume that this direct
influence is additive with influence from other vertices in the network,
in the sense that if the current activated set is $S$ in a stage of our 
process, $\n$ becomes activated in the next stage if and only if
$\infln(S)+\xn\ge\tn$.  Here, we assume that no vertices are initially
activated, that is $S_0=\emptyset$.  Note, however, that even without
contributions from other nodes, our directly-applied influence can
cause activations.  Notably, it is easy to see that
\begin{equation*}
  S_1^\ts=\{\n\in\ns:\xn\ge\tn\}.
\end{equation*}
We point out, however, that our process is not simply a matter of
selecting an initial activated set at random with marginal
probabilities $\xs$.  The influence $\xn$ we apply to $\n$ not only
has a chance to activate it at the outset, but also makes it easier
for influence from other vertices to activate it in every future stage
of the process. Lastly, we observe that this model captures the
model of Mossel and Roch~\cite{mr10} as a special case, since selecting sets to
initially activate corresponds exactly with choosing
$\xs\in\{0,1\}^n$, just with a single-round delay in the process.  This 
motivates us to term that original model as the integral
influence model, and this new model as the fractional influence model.
As before, we want to understand the structure of the expected value
of the final influenced set as a function of how we apply influence to
nodes in a graph.  We extend our function to
$\obj:[0,1]^{n}\rightarrow\mathbb{R}_{+}$ by
\begin{equation*}
  \obj(\xs)
  =
  \E_{\ts}[\;\wgt(S_n^\ts)\;\vert\;\text{we apply direct influences  $\xs$}\;]. 
\end{equation*}
We want to both understand the structure of $\obj$ and be
able to find (approximately) optimal inputs $\xs$.

\paragraph*{\bf Gap Between Integral and Fractional}  
A natural question when presented with a new model is whether it
provides any additional power over the previous one.  Here, we answer
that question in the affirmative for the fractional extension of the
Influence Maximization model.  In particular, we present two examples
here that show that fractional influence can allow a designer to
achieve strictly better results than integral influence for a
particular budget.  The first example shows that with fixed
thresholds, this gap is linear (perhaps unsurprisingly, given the
hardness of the problem under fixed thresholds).  The second example,
however, shows that even with independent uniform thresholds an
optimizer with the power to apply fractional influence can see an
improvement of up to a factor of $(1-1/e)$.
\begin{example}
  The following example shows that when thresholds are fixed, the
  optimal objective values in the fractional and integral cases can
  differ by as much as a factor of $n$, where $n$ is the number of
  vertices in the graph.  The instance we consider is a DAG consisting
  of a single, directed path of $n$ vertices.  Each edge in the path
  has weight $1/(n+1)$, and every vertex on the path has threshold
  $2/(n+1)$.  Note that since thresholds are strictly greater than
  edge weights, and every vertex, being on a simple path, has in
  degree at most one, it is impossible for a vertex to be activated
  without some direct influence being applied to it.  

  Consider our problem on the above graph with budget $1$.  In the
  integral case, we cannot split this influence, and so we may apply
  influence to -- and hence activate -- at most one vertex.  On the
  other hand, in the fractional case the following strategy guarantees
  that all vertices are activated.  Apply $2/(n+1)$ influence to the
  earliest vertex, and $1/(n+1)$ influence to the remaining $(n-1)$
  vertices.  Now, this activates the earliest vertex directly;
  furthermore, every other vertex has sufficient direct influence to
  activate it any time the vertex preceding it does.  Thus, a simple
  induction proves the claim, giving us a factor $n$ gap between  the optimal integral
  and fractional solutions.
\end{example}

\begin{example}
  Consider solving our problem on a directed graph consisting of a
  single (one-directional) cycle with $n$ vertices.  Assume that every
  edge has weight $1-K/n$, where $K$ is some parameter to be fixed later, and that thresholds on nodes are drawn from
  $\unif[0,1]$.  We consider the optimal integral and fractional
  influence to apply.

  In the fractional case, consider applying influence of exactly $K/n$
  to every node.  Note that for any node, the amount of influence we
  apply directly plus the weight on its sole incoming edge sum to $1$.
  Thus, any time a node's predecessor on the cycle becomes
  activated, the node will become activated as well.  Inductively, we
  can then see that any time at least one node is activated in the
  cycle, every node will eventually become activated.  This means that
  the expected number of activated nodes under this strategy is
  precisely
  \begin{align*}
    \ifabstract
    &n\cdot\Pr[\text{At least one node activates}] \\
    \else
    n\cdot\Pr[\text{At least one node activates}]
    \fi
    &=n(1-\Pr[\text{Every node's threshold is above $K/n$}])\\
    &=n(1-(1-K/n)^n).
  \end{align*}
  
  In the integral case, however, we cannot spread our influence perfectly
  evenly.  Each node we activate has some chance to activate
  the nodes that are after it but before the next directly activated node in the cycle.  If we have an
  interval of length $\ell$ between directly activated nodes
  (including the initial node we activate directly as one of the $\ell$ nodes in the interval), we
  can see that the expected number of nodes activated in the interval
  is 
  \begin{align*}
    \ifabstract
    &\sum_{i=1}^{\ell}\Pr[\text{Node $i$ in the interval is activated}] \\
    \else
    \sum_{i=1}^{\ell}\Pr[\text{Node $i$ in the interval is activated}]
    \fi
    &=1+\sum_{i=2}^{\ell}\Pr[\text{Nodes $2,3,\dots,i$ have thresholds below }1-K/n]\\
    &=\sum_{i=1}^{\ell} (1-K/n)^{i-1} =\frac{1-(1-K/n)^\ell}{K/n}. 
  \end{align*}
  While this tells us the expected value for a single interval, we
  want to know the expected value summed over all intervals.
  Observing from the  above calculation that the benefit of adding another node
  to an interval is strictly decreasing in the length of the interval,
  we can see that we should always make the lengths of the intervals
  as close to equal as possible.  Noting that the lengths of the
  intervals always sum to $n$, then, we can see that the total number
  of nodes activated in expectation is bounded by
  \begin{align*}
    K\frac{1-(1-K/n)^{n/K}}{K/n}=n(1-(1-K/n)^{n/K}).
  \end{align*}
  
  Note, however, that if we choose $K\approx\ln n$, we get that 
  \begin{equation*}
    \frac{1-(1-K/n)^{n/K}}{1-(1-K/n)^n}\approx 1-1/e.
  \end{equation*}
\end{example}

\section{Reduction}
In this section, we extend the submodularity results of Mossel and Roch~\cite{mr10}
for the integral version of Influence Maximization to the fractional
version; this implies that, as in the integral version, the fractional
version admits a greedy $(1-1/e-\eps)$-approximation algorithm.  At a
high level, our approach revolves around reducing a fractional
instance to an integral one, such that evolution of the process and
objective values are preserved.  Thus, before presenting our
extension, we begin by stating the main result of \cite{mr10}.  Before
stating the theorems, however, we give definitions for the function
properties each requires.  Finally, we note that our main result of
the section (Theorem~\ref{thm:disc}) considers a {\em discretization}
of the input space; at the end of this section we show that such
discretization cannot affect our objective value by too much.

We begin by giving definitions for the following properties of set
functions.  Given a set $N$ and a function
$f:\pset{N}\rightarrow\mathbb{R}$, we say that:
\begin{itemize}
\item $f$ is {\em normalized} if $f(\emptyset)=0$;
\item $f$ is {\em monotone} if $f(S) \le f(T)$ for any $S\subseteq T \subseteq N$; and
\item $f$ is {\em submodular} if $f(S\cup\{x\})-f(S)\ge f(T\cup\{x\})-f(T)$ for any $S\subseteq T\subseteq N$ and $x\in N\setminus T$.
\end{itemize}
We say that a collection of functions satisfies the above properties if every function in the collection does.
With the above definitions in hand, we are now ready to state the following result of Mossel and Roch.
\begin{theorem}{(Restatement of \cite[Theorem 1.6]{mr10})}
  \label{thm:int}
  Let $\inst=(\ns,\infls,\wgt)$ be an instance of integral Influence Maximization.  If
  both $\wgt$ and $\infls$ are normalized, monotone, and
  submodular, then $\obj$ is as well.
\end{theorem}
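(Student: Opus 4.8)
The plan is to prove the three properties in increasing order of difficulty, reducing everything to statements about the (random) final active set. For a fixed threshold vector $\ts$ and seed set $S_0=S$, write $R(S,\ts)=S_n^\ts$ for the set activated once the cascade terminates. Two structural facts drive the argument. First, for threshold dynamics the terminal set $R(S,\ts)$ is independent of the order in which we process activations, so I may test inactive vertices one at a time rather than synchronously. Second, by induction over the stages together with monotonicity of each $\infln$, if $S\subseteq T$ then $R(S,\ts)\subseteq R(T,\ts)$ for every fixed $\ts$. Normalization is then immediate: since $\infln(\emptyset)=0<\tn$ holds almost surely under $\tn\sim\unif[0,1]$, starting from $S=\emptyset$ activates nothing, so $\obj(\emptyset)=\E_\ts[\wgt(\emptyset)]=0$ using that $\wgt$ is normalized. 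Monotonicity of $\obj$ follows by applying the monotone $\wgt$ to the pointwise containment $R(S,\ts)\subseteq R(T,\ts)$ and then taking expectations over $\ts$.

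Submodularity is the real content, and the first thing to observe is why it cannot be obtained pointwise. For a \emph{fixed} threshold vector, $S\mapsto\wgt(R(S,\ts))$ is in general \emph{not} submodular: a vertex can act as a threshold gate that fires only when two separate seeds are both present, which is a supermodular effect. Submodularity of $\obj$ must therefore emerge only after averaging over $\ts\sim\unif[0,1]^n$, so any valid proof has to use the distribution of thresholds in an essential way (this is also why one cannot simply interpolate from the integral objective). Following Mossel and Roch, the plan is to expose the thresholds \emph{lazily}: instead of drawing all of $\ts$ up front, I reveal each vertex's resistance incrementally as the active set grows. Concretely, when the active set increases from $A$ to $A'$, each still-inactive $\n$ becomes activated, conditioned on the history, with probability $\bigl(\infln(A')-\infln(A)\bigr)/\bigl(1-\infln(A)\bigr)$; memorylessness of the residual uniform makes this representation faithful to the original process, and submodularity of $\infln$ says precisely that the marginal push $\infln(A')-\infln(A)$ on $\n$ shrinks as the ambient active set $A$ grows.

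With the lazy representation in hand, I would establish the marginal-gain inequality $\obj(S\cup\{x\})-\obj(S)\ge\obj(T\cup\{x\})-\obj(T)$ for $S\subseteq T$ and $x\notin T$ by coupling the four cascades seeded at $S$, $S\cup\{x\}$, $T$, and $T\cup\{x\}$ on a single shared source of lazy randomness, processing vertices in a common order. The target is to show that the expected weight of the vertices whose activation is \emph{caused} by adding $x$ to the smaller set $S$ is at least the corresponding quantity for the larger set $T$. The two ingredients are the containment of the $S$-cascades' active sets inside those of the $T$-cascades maintained throughout the coupling, and the fact that, by submodularity of each $\infln$, the extra local influence that $x$ contributes to any vertex is larger when measured against the smaller active set; applying the monotone, submodular $\wgt$ at the terminal sets then converts this per-vertex statement into the desired weight inequality.

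The main obstacle is that this coupling cannot be made literally pointwise: a single shared realization of the thresholds does \emph{not} force set containment of the ``extra activated'' sets across the smaller and larger seeds, so the domination has to be argued through the incremental exposure rather than by comparing fixed outcomes. I would therefore reduce to an inductive step in which a single vertex's threshold is exposed at a time and only the submodularity of one local function $\infln$ is invoked, and then induct on the number of still-unexposed vertices while maintaining the containment of all four coupled active sets as an invariant. This bookkeeping -- keeping the four processes synchronized and the active-set containments intact as thresholds are revealed one increment at a time -- is the delicate technical heart of the argument, and is precisely the part that Mossel and Roch carry out in full.
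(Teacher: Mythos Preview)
The paper does not give its own proof of this theorem: it is stated purely as a restatement of \cite[Theorem~1.6]{mr10} and then invoked as a black box in the proofs of Theorems~\ref{thm:disc} and~\ref{thm:cont}. So there is no ``paper's own proof'' to compare your proposal against.

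That said, your sketch is a faithful outline of the Mossel--Roch argument itself. You correctly identify that normalization and monotonicity are immediate from pointwise set containment of the terminal active sets, and---more importantly---you correctly diagnose why submodularity cannot be established pointwise for a fixed $\ts$ and must instead emerge only in expectation via the lazy-exposure (``antisense'') coupling that reveals thresholds incrementally. Your description of the conditional activation probability $\bigl(\infln(A')-\infln(A)\bigr)/\bigl(1-\infln(A)\bigr)$ and the role of local submodularity of each $\infln$ in making these increments shrink with the ambient set is exactly the mechanism Mossel and Roch exploit. The honest disclaimer at the end---that the four-way coupling bookkeeping is the delicate part and is what \cite{mr10} carries out in full---is appropriate: what you have written is a correct high-level roadmap rather than a complete proof, but for a result the present paper merely cites, that is the right level of detail.
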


We want to extend  Theorem~\ref{thm:int}  to the
fractional influence model.  We proceed by showing that
for arbitrarily fine discretizations of $[0,1]$, any instance of our
problem considered on the discretized space can be reduced to an
instance of the original problem.  Fix $N\in\mathbb{Z}_{+}$, and let
$\delta=1/N>0$ be our discretization parameter.  Let
$\Disc=\{0,\disc,2\disc,\dots,1\}$.  We consider the fractional objective function
$\obj$ restricted to the domain $\Disc^{n}$.  Lastly, let
$\unitn$ be the vector with $\disc$ in the
 component corresponding to $\n$, and $0$ in all other components.  We extend the relevant set function properties to this discretized space as follows:
\begin{itemize}
\item we say $f$ is {\em normalized} if $f(\mathbf{0})=0$;
\item we say $f$ is {\em monotone} if $\mathbf{x}\le\mathbf{y}$ implies $f(\mathbf{x})\le f(\mathbf{y})$; and
\item we say $f$ is {\em submodular} if for any $\mathbf{x}\le\mathbf{y}$, and any $\n\in\ns$, either $\yn=1$ or $f(\xs+\unitn)-f(\xs)\ge f(\ys+\unitn)-f(\ys)$,
\end{itemize}
where all comparisons and additions between vectors above are
componentwise. We get the following extension of Theorem~\ref{thm:int}.

\begin{theorem}
  \label{thm:disc}
  Let $\inst=(\ns,\infls,\wgt)$ be an instance of fractional Influence Maximization.  For any discretization $\Disc^n$ of $[0,1]^n$ (as defined
  above), 
  if both $\wgt$ and $\infls$ are normalized, monotone, and submodular,
  then $\obj$ is normalized, monotone, and submodular on $\Disc^n$.
\end{theorem}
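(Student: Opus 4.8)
The plan is to prove Theorem~\ref{thm:disc} by a \emph{gadget reduction} that encodes a discretized fractional instance as an ordinary integral instance, and then to transport the conclusion of Theorem~\ref{thm:int} back through the reduction. Concretely, I would build an integral instance $\instM=(\nsM,\inflsM,\wgtM)$ as follows. Keep the original vertices $\ns$, and for each $\n\in\ns$ attach a bundle of $N$ fresh ``booster'' vertices $B_\n=\{b_\n^1,\dots,b_\n^N\}$, so $\nsM=\ns\cup\bigcup_{\n}B_\n$. Boosters are pure sources: set $\inflnM[b]\equiv 0$, so a booster is activated exactly when it is seeded. For an original vertex $\n$, let the influence it receives be the old influence from activated originals, plus $\disc$ for each activated booster in its bundle, truncated to $[0,1]$:
\[
  \inflnM[\n](\hat S)=\min\bigl(\infln(\hat S\cap\ns)+\disc\cdot\abs{\hat S\cap B_\n},\ 1\bigr).
\]
Finally let $\wgtM(\hat T)=\wgt(\hat T\cap\ns)$ ignore boosters. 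The intuition is that seeding $k$ boosters of $\n$ delivers exactly $k\disc$ units of direct influence to $\n$, so a seed set consisting of the ``first $N\xn$'' boosters of each $\n$ simulates the direct-influence vector $\xs$.

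First I would check that $\inflsM$ and $\wgtM$ inherit the three properties from $\infls$ and $\wgt$. Each $\inflnM[\n]$ is a submodular function (on the $\ns$ coordinates) plus a modular one (the booster count), truncated at $1$; a short lemma shows that truncating a monotone submodular function at a constant keeps it monotone submodular, the additive split is immediate, and normalization follows from $\infln(\emptyset)=0$. That $\wgtM$ is normalized, monotone, and submodular is routine, since restricting to $\ns$ changes no marginal along an original element and gives zero marginal along a booster. Theorem~\ref{thm:int} then applies to $\instM$, so $\objM$ is normalized, monotone, and submodular on $\pset{\nsM}$. Next I would establish the exact process correspondence: define the embedding $\phi(\xs)=\{\,b_\n^i:\n\in\ns,\ 1\le i\le N\xn\,\}$ and argue $\obj(\xs)=\objM(\phi(\xs))$. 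Seeding $\phi(\xs)$ makes every original $\n$ see influence $\infln(\cdot)+\xn$ (the cap at $1$ never changes an activation decision, as $\tn\le 1$), so stage by stage the activated set of originals in $\instM$ coincides with the fractional process under $\xs$; boosters contribute nothing to $\wgtM$, and running $\abs{\nsM}$ rounds is more than enough for convergence.

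With $\obj=\objM\circ\phi$ in hand, normalization ($\phi(\mathbf 0)=\emptyset$) and monotonicity ($\xs\le\ys\Rightarrow\phi(\xs)\subseteq\phi(\ys)$) transfer immediately from $\objM$. The delicate step---and the one I expect to be the main obstacle---is submodularity, because incrementing coordinate $\n$ at $\xs$ and at $\ys$ corresponds to adding \emph{different} booster elements of $B_\n$ (namely $b_\n^{N\xn+1}$ versus $b_\n^{N\yn+1}$), so the set-submodularity of $\objM$ cannot be invoked against a common element directly. The fix is to exploit the symmetry of the gadget: since every $\inflnM[u]$ and $\wgtM$ depend on boosters only through per-bundle counts, and booster thresholds are irrelevant (a booster is either a seed or never activates), any permutation of $B_\n$ is an automorphism of the stochastic instance, so $\objM$ is invariant under relabeling boosters within a bundle. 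Hence I may pick a single booster $b=b_\n^{N\yn+1}$ unused by both $\phi(\xs)$ and $\phi(\ys)$ (which exists precisely when $\yn<1$) and rewrite both marginals through $b$, giving $\obj(\xs+\unitn)-\obj(\xs)=\objM(\phi(\xs)\cup\{b\})-\objM(\phi(\xs))$ and likewise at $\ys$. Since $\phi(\xs)\subseteq\phi(\ys)$ and $b\notin\phi(\ys)$, the discretized submodular inequality is now exactly the set-submodularity of $\objM$ applied to $\phi(\xs)\subseteq\phi(\ys)$ and element $b$. When $\yn=1$ no such booster remains, but then the hypothesis of the discretized definition is vacuous, matching its ``either $\yn=1$'' escape clause. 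This symmetry/relabeling argument is the crux; the remaining pieces (truncation preserving submodularity and the inductive process correspondence) are routine.
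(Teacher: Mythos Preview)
Your proposal is correct and follows essentially the same approach as the paper: the ``booster'' gadget is exactly the paper's ``activator node'' construction, the verification that $\inflsM,\wgtM$ remain normalized/monotone/submodular and the process-coupling argument $\obj(\xs)=\objM(\phi(\xs))$ are the same, and your symmetry/relabeling fix for the submodularity step (swapping which booster of $B_\n$ is added so that a common element outside $\phi(\ys)$ witnesses the marginal) is precisely the maneuver the paper uses. You have correctly identified the one nontrivial point in the reduction.
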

\begin{proof}
  We prove this by reducing an instance of the (discretized) fractional problem for
  $\inst$ to an instance of the integral influence problem and then
  applying Theorem~\ref{thm:int}.  We begin by modifying $\inst$ to
  produce a new instance $\instM=(\nsM,\inflsM,\wgtM)$.  Then, we show
  that $\inflsM$ and $\wgtM$ will retain the properties of
  normalization, monotonicity, and submodularity.  Lastly, we show
  a mapping from (discretized) fractional activations for $\inst$ to
  integral activations for $\instM$ such that objective values are
  preserved, and our desired fractional set function properties for
  $\obj$ correspond exactly to their integral counterparts for the
  objective function $\objM$ for $\instM$.  The result then follows
  immediately from Theorem~\ref{thm:int}.
  
  We begin by constructing the instance $\instM$.  The key idea is
  that we can simulate fractional activation with integral activation
  by adding a set of dummy activator nodes for each original node;
  each activator node applies an incremental amount of influence on its
  associated original node.  Then, for each original node we just need
  to add the influence from activator nodes to that from the other (original)
  nodes, and truncate the sum to one.  Fortunately, both of the
  aforementioned operations preserve the desired properties.  Lastly,
  in order to avoid the activator nodes interfering with objective
  values, we simply need to give them weight zero.  With this
  overview in mind, we now define $\instM=(\nsM,\inflsM,\wgtM)$
  formally.

  First, we construct $\nsM$.  For each node $\n\in\ns$, create a set
  $\actsn=\{\actin{1},\actin{2},\dots,\actin{1/\disc}\}$ of activator
  nodes for $\n$.  We then set
  \begin{equation*}
    \nsM
    =
    \ns\cup\left({\textstyle \bigcup_{\n\in\ns}}\actsn\right).
  \end{equation*}
  We now proceed to define the functions $\inflnM$ for each $\nM\in\nsM$. If $\nM$ is an activator node for some $\n\in\ns$, we
  simply set $\inflnM\equiv0$; otherwise, $\nM\in\ns$ and we set
  \begin{equation*}
    \inflnM(S) = \min\left(\infln[\nM](S\cap\ns)+\disc\abs{S\cap\actsn[\nM]},1\right)
  \end{equation*}
  for each $S\subseteq{\nsM}$.  Lastly, we set 
  \begin{equation*}
    \wgtM(S)=\wgt(S\cap\ns)
  \end{equation*}
  for all $S\subseteq\nsM$.  Together, these make up our modified instance $\instM$.

  We now show that since $\wgt$ and $\infls$ are normalized, monotone,
  and submodular, $\wgtM$ and $\inflsM$ will be as well.  We begin with
  $\wgtM$, since it is the simpler of the two.  Now, $\wgtM$ is clearly normalized
  since $\wgtM(\emptyset)=\wgt(\emptyset)$.
  Now, fix any $S\subseteq T\subseteq\nsM$.  First, observe we have that
  $S\cap\ns\subseteq T\cap\ns$, and so
  \begin{equation*}
    \wgtM(S) = \wgt(S\cap\ns) \le \wgt(T\cap\ns) = \wgtM(T),
  \end{equation*}
  by the monotonicity of $\wgt$.
  Second, let $\nM[u]\in\nsM\setminus T$.  If $\nM[u]\in\ns$, 
  \begin{align*}
    \wgtM(S\cup\{\nM[u]\})-\wgtM(S)
    &=
    \wgt((S\cap\ns)\cup\{\nM[u]\})-\wgt(S\cap\ns) \\
    &\ge 
    \wgt((T\cap\ns)\cup\{\nM[u]\})-\wgt(T\cap\ns) \\
    &=
    \wgtM(T\cup\{\nM[u]\})-\wgtM(T),
  \end{align*}
  since $\wgt$ is submodular.  On the other hand, if
  $\nM[u]\notin\ns$, we immediately get that
  \begin{equation*}
    \wgtM(S\cup\{\nM[u]\})-\wgtM(S)
    =
    0
    =
    \wgtM(T\cup\{\nM[u]\})-\wgtM(T).
  \end{equation*}
  Thus,  $\wgtM$ is normalized, monotone, and submodular.

  Next, we show that $\inflsM$ is normalized, monotone, and
  submodular.  For $\nM\in\nsM\setminus\ns$, it follows trivially
  since $\inflsM$ is identically $0$.  In the case that $\nM\in\ns$, it is less
  immediate, and we consider each of the properties below.
  
  \begin{itemize}
  \item $\inflnM$ normalized.  This follows 
  by computing 
    \ifabstract
    \begin{align*}
      \inflnM(\emptyset)
      &=      \min\left(\infln[\nM](\ns\cap\emptyset)+\disc\abs{\actsn[\nM]\cap\emptyset},1\right)\\
      &=
      \min\left(\infln[\nM](\emptyset)+\disc\abs{\emptyset},1\right)
      =
      0,
    \end{align*}
    \else
    \begin{equation*}
      \inflnM(\emptyset)
      =\min\left(\infln[\nM](\ns\cap\emptyset)+\disc\abs{\actsn[\nM]\cap\emptyset},1\right)
      =\min\left(\infln[\nM](\emptyset)+\disc\abs{\emptyset},1\right)
      =0,
    \end{equation*}
    \fi
    since $\infln[\nM]$ is normalized.
    \item $\inflnM$ monotone.
    Let $S\subseteq T\subseteq\nsM$.  Then we
    have both $S\cap\ns\subseteq T\cap\ns$ and
    $S\cap\actsn[\nM]\subseteq T\cap\actsn[\nM]$.  Thus, we can see
    that
    \iffalse
    \begin{align*}
      \infln[\nM](\ns\cap S)
      &\le
      \infln[\nM](\ns\cap T) 
      \text{; and}\\
      \abs{\actsn[\nM]\cap S}
      &\le
      \abs{\actsn[\nM]\cap T},
    \end{align*}
    \else
    \begin{equation*}
      \infln[\nM](\ns\cap S)
      \le
      \infln[\nM](\ns\cap T) 
      \;\;\text{and}\;\;
      \abs{\actsn[\nM]\cap S}
      \le
      \abs{\actsn[\nM]\cap T},
    \end{equation*}
    \fi
    where the former follows by the monotonicity of $\infln[\nM]$.
    Combining these, we get that
    \begin{equation*}
      \infln[\nM](\ns\cap S) + \disc\abs{\actsn[\nM]\cap S}
      \le \infln[\nM](\ns\cap T) + \disc\abs{\actsn[\nM]\cap T}.
    \end{equation*}
    Note that if we replace the expression on each side of the above inequality with the minimum of $1$ and the corresponding expression, the inequality must remain valid.  Thus, we may conclude that $\inflnM(S)\le\inflnM(T)$.
    \newcommand{\diff}[1]{D_{#1}}
    \item $\inflnM$ submodular.  
    For any $S\subseteq\nsM$ and
    $\nM[u]\in\nsM\setminus S$, define the finite difference
    \begin{multline*}
      \diff{\nM[u]} \inflnM(S)=
      \big(\infln[\nM](\ns\cap(S\cup\{\nM[u]\}))+\disc\abs{(S\cup\{\nM[u]\})\cap\actsn[\nM]}\big)
      \ifabstract\\\fi
      -\big(\infln[\nM](\ns\cap S)+\disc\abs{S\cap\actsn[\nM]}\big).
    \end{multline*}
    Observe that whenever $\nM[u]\notin\ns$, we immediately have that
    \begin{equation*}
      \infln[\nM](\ns\cap(S\cup\{\nM[u]\}))
      -
      \infln[\nM](\ns\cap S)
      =
      0.
    \end{equation*}
    Similarly, since $\nM[u]\notin S$, it is easy to see that the difference
    \begin{equation*}
      \abs{(S\cup\{\nM[u]\})\cap\actsn[\nM]}-\abs{S\cap\actsn[\nM]}
      =
      1
    \end{equation*}
    whenever $\nM[u]\in\actsn[\nM]$, and is $0$ otherwise.  With the
    above two observations in hand, we can simplify our finite
    difference formula as
    \begin{equation*}
      \diff{\nM[u]} \inflnM(S)
      =
      \begin{cases}
        \infln[\nM]\big(S\cup\{\nM[u]\})-\infln[\nM](S)
        &\quad\text{if $\nM[u]\in\ns$;}\\
        \disc
        &\quad\text{if $\nM[u]\in\actsn[\nM]$; and}\\
        0
        &\quad\text{otherwise.}
      \end{cases}
    \end{equation*}
    Now, fix some $S \subseteq T \subseteq \nsM$, and
    $\nM[u]\in\nsM\setminus T$.  By the submodularity of
    $\infln[\nM]$, the above equation immediately implies that
    $\diff{\nM[u]} \inflnM(S)\ge\diff{\nM[u]} \inflnM(T)$.  Applying
    case analysis similar to that for the monotonicity argument, we
    can see that this implies that 
    \begin{equation*}
      \inflnM(S\cup\{\nM[u]\})-\inflnM(S)
      \ge
      \inflnM(T\cup\{\nM[u]\})-\inflnM(T),
    \end{equation*}
    i.e.~that $\inflnM$ is submodular.

  \end{itemize}
  
  Thus,  $\inflsM$ is normalized,
  monotone, and submodular on $\nsM$, exactly as desired.  
  As such, we can apply Theorem~\ref{thm:int} to our function and get
  that for our modified instance $\instM=(\nsM,\inflsM,\wgtM)$, the
  corresponding function $\objM$ must be normalized, monotone, and
  submodular.  All that remains is to demonstrate our claimed mapping
  from (discretized) fractional activations for $\inst$ to integral
  activations for $\instM$.

  We do so as follows. For each $\n\in\ns$ and each $d\in\Disc$, let
  $\actsn^d=\{\actin{1},\actin{2},\dots,\actin{d/\disc}\}$.  Then, given the
  vector $\xs\in\Disc^n$, we set
  \begin{equation*}
    S^\xs = {\textstyle\bigcup_{\n\in\ns}}\actsn^{\xn},
  \end{equation*}
  where $\xn$ is the component of $\xs$ corresponding to the node
  $\n$.  

  We first show that under this definition we have that
  $\obj(\xs)=\objM(S^\xs)$.  In fact, as we will see the sets
  influenced will be the same not just in expectation, but for every
  set of thresholds $\ts$ for the vertices $\ns$.  Note that in the
  modified setting $\instM$ we also have thresholds for each vertex in
  $\nsM\setminus\ns$; however, since we chose $\inflnM\equiv0$ for all
  $\nM\in\nsM\setminus\ns$, and thresholds are independent draws from
  $\unif[0,1]$, we have that with probability $1$ we have
  $\inflnM(S)<\tn[\nM]$ for all $S$ and all $\nM\in\nsM\setminus\ns$.
  Thus, in the following discussion we do not bother to fix these
  thresholds, as their precise values have no effect on the spread of
  influence.

  Fix some vector $\ts$ of thresholds for the vertices in
  $\ns$. Let $S_1^\ts,\dots,S_n^\ts$ and
  $\hat{S}_1^\ts,\dots,\hat{S}_n^\ts$ be the influenced sets in each
  round in the setting $\inst$ with  influence vector $\xs$ and in the
  setting $\instM$ with influence set $S^\xs$, respectively.  We 
  \iffull can \fi show by
  induction that for all $i=0,1,\dots,n$, we have
  $\hat{S}_i^\ts\cap\ns=S_i^\ts$.  By the definition of $\wgtM$, this
  immediately implies that $\wgt(S_n^\ts)=\wgtM(\hat{S}_n^\ts)$, as desired.
  
  \ifabstract
  While we give full details of the induction in the full version, we
  sketch the key ideas here.  The key observation is that for each vertex
  $\n\in\ns$, the only difference between $\inst$ and $\instM$ is that
  in every stage $\n$ has $\xn$ influence directly applied to it in
  the former but not the latter, and experiences $\xn$ influence from
  elements of $\actsn$ in the latter but not the former.  Since these
  have equivalent effects, the vertices in $\inst$ and $\instM$
  activate similarly.  Observing that our definitions ensure
  $\hat{S}_0\cap\ns=S^\xs\cap\ns=\emptyset=S_0$ completes the induction.
  \else
  We prove our claim by induction.  For $i=0$,
  the equality follows simply by our definitions of the processes,
  since $S_0=\emptyset$ and $\hat{S}_0=S^\xs$.  Now, assuming the
  claim holds for $i-1$, we need to show that it holds for $i$.  By
  our definition of the processes, we know that
  \begin{align*}
    S_i^\ts&=S_{i-1}^{\ts}\cup\{\n\in\ns\setminus S_{i-1}^{\ts}:\infln(S_{i-1}^\ts)+\xn\ge\tn\};
    \intertext{similarly, we have that}
    \hat{S}_i^\ts&=\hat{S}_{i-1}^{\ts}\cup\{\nM\in\nsM\setminus\hat{S}_{i-1}^{\ts}:\inflnM(\hat{S}_{i-1}^\ts)\ge\tn[\nM]\}.
  \end{align*}
  Recall, however, that for all $\nM\in\nsM\setminus\ns$, we have that
  $\inflnM\equiv0$, and it follows that
  $\hat{S}_{i}^{\ts}\setminus\ns=S^\xs$ for all $i$.  Thus, we can
  rewrite the second equality above as
  \begin{equation*}
    \hat{S}_i^\ts=\hat{S}_{i-1}^{\ts}\cup\{\n\in\ns\setminus\hat{S}_{i-1}^{\ts}:\inflnM[\n](\hat{S}_{i-1}^\ts)\ge\tn\}.
  \end{equation*}
  Consider an arbitrary $\n\in\ns\setminus
  S_{i-1}^\ts=\ns\setminus\hat{S}_{i-1}^{\ts}$.  Now, we know that
  $\n\in\hat{S}_{i}^{\ts}$ if and only if
  \begin{align*}
    \tn 
    \le \inflnM[\n](\hat{S}_{i-1}^{\ts})
    = \min(\infln(\hat{S}_{i-1}^{\ts}\cap\ns) + \disc\abs{\hat{S}_{i-1}^{\ts}\cap\actsn},1).
  \end{align*}
  Recall, however, that $\hat{S}_{i-1}^{\ts}\cap\ns=S_{i-1}^{\ts}$ by
  assumption.  Furthermore, we can compute that
  \begin{equation*}
    \abs{\hat{S}_{i-1}^{\ts}\cap\actsn}
    =\abs{S^{\xs}\cap\actsn}
    =\abs{\actsn^{\xn}}
    =\abs{\{\actin{1},\dots,\actin{\xn}\}}
    =\xn/\disc.
  \end{equation*}
  Thus, since we know that $\tn\le1$ always, we can conclude that $\n\in\hat{S}_{i}^\ts\cap\ns$ if and only if
  \begin{equation*}
    \tn \le \infln({S}_{i-1}^{\ts})+\xn,
  \end{equation*}
  which is precisely the condition for including $\n$ in
  $S_{i}^{\ts}$.  Thus, we can conclude that
  $\hat{S}_{i}^{\ts}\cap\ns=S_{i}^{\ts}$.
  \fi

  We have now shown that for all vectors of thresholds $\ts$ for
  vertices in $\ns$, with probability $1$ we have that
  $\hat{S}_{i}^{\ts}\cap\ns={S}_{i}^{\ts}$ for $i=0,1,\dots,n$.  In
  particular, note that $\hat{S}_n^{\ts}\cap\ns={S}_n^{\ts}$, and so
  $\wgtM(\hat{S}_n^{\ts})=\wgt({S}_{n}^{\ts})$.  Thus, we may conclude
  that $\objM(S^\xs)=\obj(\xs)$.  

  Lastly, we need to show that for our given mapping from
  (discretized) fractional activation vectors $\xs$ to set $S^\xs$, we
  have that the desired properties for $\obj$ are satisfied if the
  corresponding properties are satisfied for $\objM$.  So we assume
  that $\objM$ is normalized, monotone, and submodular (as, in fact,
  it must be by the above argument and Theorem~\ref{thm:int}), and show
  that $\obj$ is as well. We begin by noting that $\xs=\mathbf{0}$ implies
  $S^\xs=\emptyset$, and so $\obj(\mathbf{0})=\objM(\emptyset)=0$.  Now,
  let $\xs,\ys\in\Disc^n$ such that $\xs\le\ys$ componentwise.  First,
  we must have that $S^\xs\subseteq S^\ys$ and so
  \begin{equation*}
    \obj(\xs) = \objM(S^\xs) \le \objM(S^\ys) = \obj(\ys).
  \end{equation*}
  Second, pick some $\n\in\ns$ such that $\yn<1$.  Recall our
  definition of $\inflnM$; by inspection, we have
  $\inflnM(S)=\inflnM(T)$ any time both $S\cap\ns=T\cap\ns$ and
  $\abs{S\cap\actsn}=\abs{T\cap\actsn}$, for any $S,T\in\nsM$. Thus,
  we have $S^{\xs+\unitn}=S^\xs\cup\{\actin{(\xn/\disc)+1}\}$ and
  $S^{\ys+\unitn}=S^\ys\cup\{\actin{(\yn/\disc)+1}\}$.  So we have
  \begin{align*}
    \obj(\xs+\unitn)-\obj(\xs)
    &=\objM(S^\xs\cup\{\actin{(\xn/\disc)+1}\}) - \objM(S^\xs)\\
    &=\objM(S^\xs\cup\{\actin{(\yn/\disc)+1}\})-\objM(S^\xs)\\
    &\ge \objM(S^\ys\cup\{\actin{(\yn/\disc)+1}\}) - \objM(S^\ys)\\
    &=\obj(\ys+\unitn)-\obj(\ys).
  \end{align*}
  Thus,  $\obj$ has the claimed properties on $\Disc^n$; the result follows. 
\end{proof}

\newcommand{\epsn}[1][\n]{{\boldsymbol\eps}_{#1}}
In fact, we can use the same technique to achieve the following
extension to fully continuous versions of our properties.  We define
the following properties for $\obj$ on the continuous domain
$[0,1]^n$:
\begin{itemize}
\item we say $f$ is {\em normalized} if $f(\mathbf{0})=0$;
\item we say $f$ is {\em monotone} if $\mathbf{x}\le\mathbf{y}$ implies $f(\mathbf{x})\le f(\mathbf{y})$; and
\item we say $f$ is {\em submodular} if for any
  $\mathbf{x}\le\mathbf{y}$, any $\n\in\ns$, and for any $\eps>0$ such
  that $\yn+\eps\le1$, we have that
  $f(\xs+\epsn)-f(\xs)\ge f(\ys+\epsn)-f(\ys)$,
\end{itemize}
where $\epsn$ is the vector with a value of $\eps$ in the coordinate
corresponding to $\n$ and a value of $0$ in all other coordinates.  As
before, all comparisons and additions between vectors above are
componentwise.  
\ifabstract
The same techniques immediately give us the following theorem; in the 
interests of space, we defer the proof to the full version of the paper.
\else   
The same techniques immediately give us the following theorem.
\fi

\begin{theorem}
  \label{thm:cont}
  Let $\inst=(\ns,\infls,\wgt)$ be an instance of our problem.  If
  both $\wgt$ and $\infls$ are normalized, monotone, and submodular,
  then $\obj$ is normalized, monotone, and submodular on $[0,1]^n$.
\end{theorem}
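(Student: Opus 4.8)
The plan is to re-run the reduction behind Theorem~\ref{thm:disc} essentially verbatim, but with \emph{heterogeneous} activator weights. The key observation is that any single instance of one of the three continuous properties references only finitely many real influence levels: for submodularity, the four points $\xs$, $\xs+\epsn$, $\ys$, $\ys+\epsn$, i.e.\ the levels $\xn,\xn+\eps,\yn,\yn+\eps$ at node $\n$ together with $x_u,y_u$ at the remaining coordinates. These need not lie on any common grid $\Disc^n$, so I cannot invoke Theorem~\ref{thm:disc} as a black box; instead I would rebuild the integral instance $\instM$ with just enough activator nodes, of tailored weights, to realize exactly those levels.

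Concretely, to establish continuous submodularity I would fix $\xs\le\ys$, a node $\n$, and $\eps>0$ with $\yn+\eps\le1$, and construct $\instM=(\nsM,\inflsM,\wgtM)$ exactly as in the proof of Theorem~\ref{thm:disc}, except that the activator set $\actsn[u]$ now carries prescribed weights: at each $u\ne\n$ two activators of weights $x_u$ and $y_u-x_u$, and at $\n$ three activators $b_1,b_2,a^{*}$ of weights $\xn$, $\yn-\xn$, and $\eps$. I keep $\inflnM[u](S)=\min(\infln[u](S\cap\ns)+(\text{total activator weight in }S\cap\actsn[u]),1)$ and $\wgtM(S)=\wgt(S\cap\ns)$. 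I then let $S^\xs$ pick, at each coordinate, the activators summing to $x_u$ (including $\{b_1\}$ at $\n$) and $S^\ys$ the activators summing to $y_u$ (including $\{b_1,b_2\}$ at $\n$), so that $S^\xs\subseteq S^\ys$ and $a^{*}\notin S^\ys$. Since $\inflnM[u]$ sees the activators only through their total weight, adjoining $a^{*}$ raises the realized level at $\n$ from $\xn$ to $\xn+\eps$ (resp.\ $\yn$ to $\yn+\eps$), giving $S^{\xs+\epsn}=S^\xs\cup\{a^{*}\}$ and $S^{\ys+\epsn}=S^\ys\cup\{a^{*}\}$.

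Every verification in the proof of Theorem~\ref{thm:disc} then goes through unchanged: normalization, monotonicity, and submodularity of $\inflsM$ and $\wgtM$ hold because adjoining an activator changes the pre-truncation influence by exactly its (constant) weight, so the finite-difference case analysis and the min-truncation argument are identical, and the round-by-round induction $\hat S_i^\ts\cap\ns=S_i^\ts$ carries over once one checks that the total activator weight in $S^\xs$ at $\n$ equals $\xn$ (the only place the uniform $\disc$ and the count $1/\disc$ entered before). This yields $\obj=\objM\circ S^{(\cdot)}$ at all four points, after which the submodularity of $\objM$, supplied by Theorem~\ref{thm:int} applied to $\instM$, translates directly into $\obj(\xs+\epsn)-\obj(\xs)\ge\obj(\ys+\epsn)-\obj(\ys)$. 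Normalization ($\obj(\mathbf 0)=\objM(\emptyset)=0$) and monotonicity ($S^\xs\subseteq S^\ys$, so $\obj(\xs)=\objM(S^\xs)\le\objM(S^\ys)=\obj(\ys)$) follow from the same construction with even less work.

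The main obstacle is exactly the gap the word ``immediately'' hides: one cannot literally reduce to Theorem~\ref{thm:disc}, since continuous coordinates need not be commensurable and hence no single $N$ makes $\xs,\ys,\eps$ all multiples of $1/N$. The step deserving the most care is therefore to notice that a \emph{single} (sub)modularity inequality references only finitely many real levels, and that the activator construction together with all of its property-preservation arguments is insensitive to the activators having equal weights; replacing the uniform weight $\disc$ by tailored weights summing to the required real levels preserves every step, which is what makes the continuous statement fall out of the discrete machinery.
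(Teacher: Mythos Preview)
Your proposal is correct and follows essentially the same approach as the paper: build a tailored integral instance realizing just the finitely many influence levels needed for a single instance of the inequality, then invoke Theorem~\ref{thm:int}. The only difference is packaging: the paper uses three \emph{global} activator nodes $a^{\xs}$, $a^{\ys-\xs}$, $a^{\eps}$, each of which contributes to the influence function of \emph{every} original vertex (namely $x_{u}$, $y_{u}-x_{u}$, and $\eps\cdot\mathbf{1}[u=\n]$ respectively), whereas you keep the per-vertex activator sets $\actsn[u]$ from the discrete proof and assign them heterogeneous weights. Both constructions yield nested sets $S^{\xs}\subseteq S^{\ys}$ with a single extra element encoding the $\eps$-increment, and both rely on the same observation that the property-preservation arguments for $\inflsM$ and $\wgtM$ never used uniformity of the activator weights, only that each activator contributes a fixed constant before truncation.
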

\iffull
\begin{proof}
  We use the exact same technique as in the proof of Theorem~\ref{thm:disc}.
  The only difference is how we define the activator nodes in our
  modified instance $\instM$.  Here, rather than trying to model the
  entire domain, we simply focus on the points we want to verify our
  properties on.  To that end, fix some $\xs\le \ys$, as well as an
  $\eps>0$ and some $\n\in\ns$.  We will only define three activator
  nodes here: $a^\xs$, $a^{\xs-\ys}$, and $a^\eps$.  The first two
  contributes amounts of $\xn[\n']$ and $(\yn[\n']-\xn[\n'])\ge0$,
  respectively, to the modified influence function for vertex
  $\n'\in\ns$.  The last contributes an amount of $\eps$ to the
  influence function for vertex $\n$, and makes no contribution to any
  other influence functions.  As before, all influence functions get
  capped at one.  Our modified weight function is defined exactly as
  before, and it is easy to see that exactly the same argument will
  imply that the modified weight and influence functions will be
  normalized, monotone, and submodular, allowing us to apply
  Theorem~\ref{thm:int}.

  Thus, all that remains is to relate the function values between the
  original and modified instances.  Note, however, that here it is
  even simpler than in the discretized case.  If $\objM$ is the
  objective for $\instM$, then we can compute that:
  \begin{align*}
    \obj(0)&=\objM(\emptyset)\text{;}\\
    \obj(\xs)&=\objM(\{a^\xs\})\text{;}\\
    \obj(\ys)&=\objM(\{a^\xs,a^{\ys-\xs}\})\text{;}\\
    \obj(\xs+\epsn)&=\objM(\{a^\xs,a^\eps\})\text{; and}\\
    \obj(\ys+\epsn)&=\objM(\{a^\xs,a^{\ys-\xs},a^\eps\})\text{.}
  \end{align*}
  We can use the above inequalities to show the desired qualities for
  $\obj$, via a simple case analysis and their discrete counterparts for
  $\objM$.
\end{proof}
\fi

One concern with discretizing the space we optimize over as in
Theorem~\ref{thm:disc} is what effect the discretization has on the
objective values that can be achieved.  
\iffull
As the following theorem shows, however, we can only lose a
$\disc n/K$ factor from our objective when we discretize the space to
multiples of $\disc$.  
\else
As the following theorem shows, however, that we can only lose a
$\disc n/K$ factor from our objective when we discretize the space to
multiples of $\disc$; we defer the proof to the full version of the paper.
\fi
\begin{theorem}
  \label{thm:approx}
  Let $\inst=(\ns,\infls,\wgt)$ be an instance of our problem.  Then
  for any discretization $\Disc^n$ of $[0,1]^n$ (as defined above), if
  $\obj$ is normalized, monotone, and submodular on $\Disc^n$, we have
  that
  \begin{equation*}
    \max_{\substack{\xs\in\Disc^n:\\\norm[1]{\xs}\le K }}\obj(\xs)
    \ge
    (1-\disc{\textstyle\frac{n}{K}})\max_{\substack{\xs\in{[0,1]}^n:\\\norm[1]{\xs}\le K}}\obj(\xs),
\end{equation*}
  for any $K$.
\end{theorem}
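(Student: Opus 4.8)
The plan is to take an optimal continuous solution, shrink it slightly to free up budget, and then round it back up onto the grid $\Disc^n$; monotonicity will guarantee that rounding up only helps, while the freed budget will absorb the extra $\ell_1$-mass that rounding up introduces. Concretely, let $\xs^*$ attain the continuous maximum on the right-hand side (it exists by compactness of $\{\xs\in[0,1]^n:\norm[1]{\xs}\le K\}$ together with continuity of $\obj$; otherwise an $\eps$-optimal point suffices), and set $\lambda=1-\disc n/K$. If $\lambda\le 0$ the claimed inequality is trivial, since the left-hand side is at least $\obj(\mathbf 0)=0$ by normalization and feasibility of $\mathbf 0$, while the right-hand side is nonpositive; so assume $\lambda\in(0,1)$. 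Define $\ys\in\Disc^n$ by rounding each coordinate of $\lambda\xs^*$ up to the nearest multiple of $\disc$, namely $y_\n=\disc\lceil\lambda x^*_\n/\disc\rceil$. Since $\lambda x^*_\n\le 1$, each $y_\n$ lies in $\{0,\disc,\dots,1\}$, so $\ys$ is a legitimate grid point.

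The first routine check is feasibility of $\ys$: rounding each of the $n$ coordinates up costs at most $\disc$ in $\ell_1$-norm, so $\norm[1]{\ys}\le \lambda\norm[1]{\xs^*}+\disc n\le \lambda K+\disc n=(1-\disc n/K)K+\disc n=K$. The value bound then follows from two facts: monotonicity gives $\obj(\ys)\ge\obj(\lambda\xs^*)$ because $\ys\ge\lambda\xs^*$ componentwise, and the scaling lemma below gives $\obj(\lambda\xs^*)\ge\lambda\,\obj(\xs^*)$. Chaining these with feasibility of $\ys$ yields $\max_{\xs\in\Disc^n,\,\norm[1]{\xs}\le K}\obj(\xs)\ge\obj(\ys)\ge\lambda\,\obj(\xs^*)=(1-\disc n/K)\max_{\xs\in[0,1]^n,\,\norm[1]{\xs}\le K}\obj(\xs)$, which is exactly the claim. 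Here I use that $\obj$ is monotone and submodular on the full cube $[0,1]^n$: monotonicity is immediate from the model (adding direct influence can only enlarge the activated set along every threshold realization), and continuous submodularity holds under our instance hypotheses by Theorem~\ref{thm:cont}. This is the point at which the restriction to $\Disc^n$ in the statement must be read in light of the preceding continuous result.

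The heart of the argument is the scaling lemma: for monotone submodular $\obj$ with $\obj(\mathbf 0)=0$, any $\xs\in[0,1]^n$, and any $\lambda\in[0,1]$, we have $\obj(\lambda\xs)\ge\lambda\,\obj(\xs)$. I would prove this by showing that the single-variable function $g(t)=\obj(t\xs)$ is concave on $[0,1]$; concavity together with $g(0)=\obj(\mathbf 0)=0$ gives $g(\lambda)\ge\lambda g(1)+(1-\lambda)g(0)=\lambda\,\obj(\xs)$. To establish concavity it suffices to show that the increments $g(t)-g(t-h)$ are non-increasing in $t$ for a fixed small $h>0$. Writing the passage from $(t-h)\xs$ to $t\xs$ as a sequence of $n$ single-coordinate steps, where step $j$ raises coordinate $\n_j$ by $h x_{\n_j}$, and decomposing the passage from $(t-2h)\xs$ to $(t-h)\xs$ identically, one matches the two sequences step by step. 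Before step $j$, the base in the higher passage dominates the base in the lower passage componentwise, while the step itself is the same increment $h x_{\n_j}$ in the same coordinate $\n_j$. Continuous submodularity then makes each marginal in the higher passage no larger than the matched marginal in the lower passage; summing over $j$ gives $g(t)-g(t-h)\le g(t-h)-g(t-2h)$, which is the desired concavity.

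The main obstacle is precisely this scaling lemma, and within it the verification that the coordinate-by-coordinate decomposition lines up so that the continuous submodularity inequality — which compares a fixed coordinate increment at two nested points — applies to each matched pair of steps, correctly incorporating the cross-coordinate effect of having already raised the earlier coordinates in both passages. The remaining ingredients (attainment of the continuous optimum, the at-most-$\disc n$ cost of rounding up, and the degenerate case $\disc n\ge K$) are routine.
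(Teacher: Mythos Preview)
Your proof is correct but takes a genuinely different route from the paper. The paper rounds the continuous optimum $\xs^*$ \emph{up} to the nearest grid point $\bar\xs^*$ first (using continuous monotonicity to get $\obj(\bar\xs^*)\ge\obj(\xs^*)$, just as you need it), accepting that $\norm[1]{\bar\xs^*}$ may overshoot $K$ by as much as $\disc n$. It then reconstructs $\bar\xs^*$ greedily on the grid, one $\disc$-increment at a time, and uses \emph{discrete} submodularity on $\Disc^n$ to argue the greedy marginals are nonincreasing; consequently the first $K/\disc$ increments already capture at least a $K/(K+\disc n)\ge 1-\disc n/K$ fraction of $\obj(\bar\xs^*)$, and that prefix has $\ell_1$-norm exactly $K$.

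Your shrink-then-round-up approach is equally valid and arguably cleaner, but it purchases that directness by invoking \emph{continuous} submodularity (Theorem~\ref{thm:cont}) to prove the concavity-along-rays scaling lemma $\obj(\lambda\xs^*)\ge\lambda\,\obj(\xs^*)$. That goes beyond the theorem's stated hypothesis, which only assumes submodularity on $\Disc^n$; you correctly flag this and appeal to Theorem~\ref{thm:cont}. The paper's greedy-truncation argument stays within the discrete hypothesis apart from the single continuous-monotonicity step (which, as you note, is immediate from the model). So the trade-off is: your argument yields a reusable scaling lemma but leans on a heavier prior result, while the paper's argument is more self-contained relative to the theorem as stated.
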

\iffull
\begin{proof}
  Let $\xs^\ast$ be an optimal solution to our problem on $[0,1]^n$, i.e. we have
  \begin{equation*}
    \xs^\ast=\argmax_{\substack{\xs\in{[0,1]}^n:\norm[1]{\xs}\le K}}\obj(\xs).
  \end{equation*}
  Let $\bar{\xs}^\ast$ be the result of rounding $\xs^\ast$ up
  componentwise to the nearest element of $\Disc^n$.  Formally, we
  define $\bar{\xs}^{\ast}$ by
  $\bar{x}^{\ast}_{\n}=\min\{d\in\Disc:d\ge{x}^\ast_v\}$.  Note that
  by monotonicity, we must have that
  $\obj(\bar{\xs}^\ast)\ge\obj(\xs^\ast)$; we also have that
  $\norm[1]{\bar{\xs}^\ast}\le\norm[1]{\xs^\ast}+\disc n$.  Now,
  consider constructing $\bar{\xs}^\ast$ greedily by adding $\disc$ to
  a single coordinate in each step.  Formally, set $\xs^{0}={\mathbf0}$,
  and for each $i=1,2,\dots,\norm[1]{\bar{\xs}^\ast}/\disc$ set
  \begin{equation*}
    \xs^{i}=\xs^{i-1}+\unitn\text{ for some }\n\in\argmax_{\n:\;x^{i-1}_{\n}<\bar{x}^\ast_\n}(\obj(\xs^{i-1}+\unitn)-\obj(\xs^{i-1})),
  \end{equation*}
  where (as before) $\unitn$ is a vector with $\disc$ in the component
  corresponding to $\n$ and $0$ in all other components.  Note that
  the submodularity of $\obj$ implies that
  $\obj(\xs^{i})-\obj(\xs^{i-1})$ is decreasing in $i$.  An immediate
  consequence of this is that, for any $i$, we have that
  \begin{equation*}
    \obj(\xs^{i}) \ge
    \frac{i}{\norm[1]{\bar{\xs}^{\ast}}/\disc}\obj(\bar{\xs}^{\ast}).
  \end{equation*}
  Invoking the above for $i=K/\disc$ we get that
  \begin{equation*}
  \obj(\xs^{K/\disc}) 
  \ge \frac{K/\disc}{\norm[1]{\bar{\xs}^{\ast}}/\disc}\obj(\bar{\xs}^{\ast})
  \ge \frac{K}{K\!+\!\disc n}\obj(\bar{\xs}^{\ast})
  \ge (1-\disc{\textstyle\frac{n}{K}})\obj(\bar{\xs}^{\ast}).
  \end{equation*}
  We observe that $\norm[1]{\xs^{K/\disc}}=K$, and
  $\xs^{K/\disc}\in\Disc^n$, and so the desired theorem follows.
\end{proof}
\fi

By combining the above theorems, we can apply the same results
as~\cite{kkt03} to get the following corollary (see that paper and
references therein for details).
\begin{corollary}
  There exists a greedy $(1-1/e-\eps)$-approximation for maximizing
  $\obj(\xs)$ in both the discretized and continuous versions of the
  fractional influence model.
\end{corollary}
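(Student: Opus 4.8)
The plan is to reduce budgeted maximization of $\obj$ to a cardinality-constrained monotone submodular \emph{set}-maximization problem, where the classical greedy analysis---together with the Monte Carlo machinery of Kempe et al.~\cite{kkt03}---already yields a $(1-1/e-\eps)$ guarantee. I would treat the discretized version first and then bootstrap to the continuous version through Theorem~\ref{thm:approx}.

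For the discretized version I would reuse the instance $\instM=(\nsM,\inflsM,\wgtM)$ and its objective $\objM$ from the proof of Theorem~\ref{thm:disc}, which establishes that $\objM$ is a genuine normalized, monotone, submodular set function and that $\obj(\xs)=\objM(S^\xs)$ under the activator encoding $S^\xs={\textstyle\bigcup_{\n\in\ns}}\actsn^{\xn}$. Restricting the ground set to the activators $\bigcup_{\n\in\ns}\actsn$ (which preserves monotonicity and submodularity), the budget becomes a \emph{pure cardinality} constraint: since each activator of $\n$ contributes $\disc$ to $\xn$, we have $\norm[1]{\xs}=\disc\,\abs{S}$, so $\norm[1]{\xs}\le K$ is exactly $\abs{S}\le\lfloor K/\disc\rfloor$, and the cap $\xn\le1$ is automatic because $\abs{\actsn}=1/\disc$. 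Since the activators inside a block $\actsn$ are interchangeable with respect to $\objM$, greedily maximizing $\objM$ under this cardinality constraint always returns a set of the form $S^\xs$ for some valid $\xs\in\Disc^n$, and the classical analysis for cardinality-constrained monotone submodular maximization certifies that it attains a $(1-1/e)$-fraction of $\max_{\norm[1]{\xs}\le K}\obj(\xs)$.

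The step that actually introduces the ``$-\eps$'' is that $\obj$, and hence $\objM$, is an expectation over the random thresholds and cannot be evaluated exactly; this is the main point to handle, and it is exactly what I would import from~\cite{kkt03}. I would estimate each marginal gain $\objM(S\cup\{\nM[u]\})-\objM(S)$ by averaging over polynomially many independent threshold draws, note that the activator ground set has size $n/\disc$ (polynomial in the input), and invoke the standard robust-greedy argument showing that sufficiently accurate marginal estimates degrade the guarantee only to $(1-1/e-\eps)$, with high probability and in polynomial time. No new idea is needed beyond checking that the reduction preserves the monotone-submodular structure the argument relies on, which Theorem~\ref{thm:disc} already supplies.

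Finally, for the continuous version I would not optimize over the continuum directly but instead pick a fine discretization and appeal to Theorem~\ref{thm:approx}. Writing $\mathrm{OPT}_{\Disc}$ and $\mathrm{OPT}$ for the discretized and continuous optima under the budget $K$, Theorem~\ref{thm:approx} gives $\mathrm{OPT}_{\Disc}\ge(1-\disc n/K)\,\mathrm{OPT}$, while the discretized greedy returns $\xs$ with $\obj(\xs)\ge(1-1/e-\eps_1)\,\mathrm{OPT}_{\Disc}$; composing these bounds yields $\obj(\xs)\ge(1-1/e-\eps_1)(1-\disc n/K)\,\mathrm{OPT}\ge(1-1/e-\eps)\,\mathrm{OPT}$ as soon as I choose $\disc=1/N$ with $N\ge 2n/(K\eps)$ and $\eps_1\le\eps/2$. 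The only thing to watch is that shrinking $\disc$ enlarges the activator ground set to $O(nN)$, so I must confirm that $N$ stays polynomial in the input size and $1/\eps$ (it does, being $O(n/(K\eps))$), keeping the greedy polynomial-time; Theorem~\ref{thm:cont} is available if one instead prefers to argue submodularity directly in the continuous limit.
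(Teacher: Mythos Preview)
Your proposal is correct and follows essentially the same route the paper indicates: combining the submodularity results (Theorems~\ref{thm:disc}--\ref{thm:cont}) and the discretization bound (Theorem~\ref{thm:approx}) with the greedy/Monte-Carlo machinery of~\cite{kkt03}. The paper itself gives no further detail beyond pointing to those ingredients, so your explicit passage through the activator set function $\objM$---turning the $\ell_1$ budget into a cardinality constraint so that the classical Nemhauser--Wolsey--Fisher analysis applies verbatim---is exactly the sort of unpacking the corollary tacitly relies on.
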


\section{DAGs}
\label{sec:dag}
In this section, we focus on a special case of fractional influence model called the linear influence model, and argue
that some aspects of the problem become simpler on DAGs when the thresholds are uniformly distributed.  
Our interest in DAGs is motivated by the fact that the hardness results presented in the next section all hold for DAGs.  
In the linear variant of the problem, our influence functions are computed as
follows.  We are given a digraph $G=(\ns,E)$ and a weight function $\ws$
on edges.  We use $\nbdin(v)$ and $\nbdout(v)$ to denote the sets of
nodes with edges to $v$ and edges from $v$, respectively.  Then, we denote
the influence function $\infln$ for $v$ by
\begin{equation*}
  \infln(S)=\sum_{u\in S \cap \nbdin(\n)}\w{uv}.
\end{equation*}
In this model, we assume that $\sum_{u\in\nbdin(v)}\w{uv}\le1$ always.  
Similar to the fractional influence model, our goal is to pick an influence vector
$\xs\in[0,1]^{\abs{V}}$ indexed by $V$ to maximize
\begin{equation*}
  \obj(\xs)
  =
  \E_{\ts}[\;\abs{S_n^\ts}\;\vert\;\text{we apply direct influences  $\xs$}\;],
\end{equation*}
where $S_1^\ts,\dots,S_n^\ts$ is the sequence of sets of nodes
activated under thresholds $\ts$ and direct influence $\xs$.  We
sometimes abuse notation and use $\obj(S)$ to denote $\obj$ applied to
the characteristic vector of the set $S\in2^\ns$.
Given a DAG $G=(V,E)$ and a fractional influence vector $\xs\in[0,1]^{\abs{V}}$ indexed by $V$, we define the sets
\begin{align*}
  \Infl(\xs)&=\{v\in V: \xn > 0\}\text{, and}\\
  \Sat(\xs)&=\{{\textstyle v\in V: \xn+\sum_{u\in\nbdin(v)}\w{uv} > 1}\},
\end{align*}
as the sets of nodes {\em influenced} by $\xs$ and {\em
  (over-)saturated} by $\xs$.  Note that
$\Sat(\xs)\subseteq\Infl(\xs)$.  Next, we show that under specific circumstances, $\obj$ becomes a linear function and therefore the influence maximization problem efficiently solvable. 

\begin{theorem}
  Given a DAG $G$ and influence vector $\xs$, if G contains no path
  from an element of $\Infl(\xs)$ to any element of $\Sat(\xs)$, then
  we have that
\begin{equation*}
\textstyle  \obj(\xs) = \sum_{v\in\ns}\xn\obj({\mathbf 1}_v),
\end{equation*}
and therefore the influence maximization problem can be solved efficiently restricted to vectors $\xs$ meeting this condition.
\end{theorem}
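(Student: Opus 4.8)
The plan is to prove the pointwise identity $\obj(\xs)=\sum_{v\in\ns}\xn\obj(\mathbf 1_v)$ by establishing it one activated node at a time. For an influence vector $\xs$ and a node $w$, write $q_\xs(w)=\Pr[\,w\in S_n^\ts\,]$ for the probability that $w$ is eventually activated, and abbreviate $p_v(w)=q_{\mathbf 1_v}(w)$, so that $\obj(\xs)=\sum_w q_\xs(w)$ and $\obj(\mathbf 1_v)=\sum_w p_v(w)$. It then suffices to show, for every $w$, the linear identity $q_\xs(w)=\sum_{v\in\ns}\xn\,p_v(w)$; summing over $w$ and exchanging the order of summation yields the theorem. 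I would prove this identity by induction on a topological ordering of $G$, handling a node $w$ once all its in-neighbors have been processed. The structural fact I will use repeatedly is acyclicity: if $u\in\nbdin(w)$ then $w$ cannot reach $u$, so applying influence only at $w$ never activates $u$, giving $p_w(u)=0$; more generally $p_v$ satisfies the clean, $\xs$-independent recurrence $p_v(w)=\sum_{u\in\nbdin(w)}\w{uw}p_v(u)$ for all $w\neq v$, with $p_v(v)=1$.

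For the inductive step I split on whether $w$ is saturated. Suppose first $w\notin\Sat(\xs)$. Because $w$ follows all its in-neighbors in topological order, its threshold $\tn[w]\sim\unif[0,1]$ is independent of the final activations of $\nbdin(w)$, so conditioning on those activations gives $q_\xs(w)=\E[\min(\xn[w]+\sum_{u\in\nbdin(w)}\w{uw}\mathbf{1}[u\text{ active}],\,1)]$. Non-saturation means $\xn[w]+\sum_{u\in\nbdin(w)}\w{uw}\le 1$, so the truncation is inactive and this equals $\xn[w]+\sum_{u\in\nbdin(w)}\w{uw}q_\xs(u)$. Substituting the inductive hypothesis $q_\xs(u)=\sum_v\xn\,p_v(u)$ and regrouping, the $v=w$ contribution is $\xn[w]\sum_u\w{uw}p_w(u)=0$ by acyclicity, while for each $v\neq w$ the inner sum collapses to $p_v(w)$ via the recurrence above; together with $p_w(w)=1$ this recovers $q_\xs(w)=\sum_v\xn\,p_v(w)$.

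The case $w\in\Sat(\xs)$ is where the hypothesis of the theorem does its work. Any in-neighbor $u\in\nbdin(w)$ reachable from some influenced node $v$ would yield a directed path $v\to\cdots\to u\to w$ from $\Infl(\xs)$ into $\Sat(\xs)$, contradicting the hypothesis; hence no in-neighbor of $w$ is reachable from $\Infl(\xs)$, so $q_\xs(u)=0$ for every $u\in\nbdin(w)$. The cascaded influence into $w$ is therefore zero almost surely, and $q_\xs(w)=\min(\xn[w],1)=\xn[w]$. On the right-hand side, the same no-path condition forces $p_v(w)=0$ for every influenced $v\neq w$, since such a $v$ cannot reach $w$ in $G$; combined with $p_w(w)=1$ this gives $\sum_v\xn\,p_v(w)=\xn[w]$, matching $q_\xs(w)$.

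With the pointwise identity in hand, $\obj(\xs)=\sum_{v\in\ns}\xn\obj(\mathbf 1_v)$ follows by summing over $w$, and $\xs\mapsto\obj(\xs)$ is linear on the set of vectors satisfying the hypothesis. Each coefficient $\obj(\mathbf 1_v)=\sum_w p_v(w)$ is computable in polynomial time by evaluating the recurrence $p_v(w)=\sum_{u\in\nbdin(w)}\w{uw}p_v(u)$ in topological order — the single-source dynamic program alluded to for the integral case — so optimizing $\obj$ over such vectors reduces to maximizing a linear function with efficiently computable coefficients. I expect the main obstacle to be the saturated case: a saturated node introduces a genuine truncation $\min(\cdot,1)$ that in general destroys linearity, and the whole content of the theorem is that the no-path hypothesis guarantees saturated nodes receive no cascaded influence, so the truncation never actually binds. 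Verifying the cross-term cancellation in the non-saturated step and the independence of $\tn[w]$ from upstream activations are the remaining routine points.
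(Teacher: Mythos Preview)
Your proof is correct and takes a genuinely different route from the paper's. The paper argues by induction on the number of vertices: it removes a source $s$, defines an auxiliary influence vector $\hat{\ys}$ on $G\setminus\{s\}$ that simulates $s$'s outgoing edge weights, and then couples the random processes on $G$ and on $G\setminus\{s\}$ stage by stage (showing $S_i^\ts\setminus\{s\}\subseteq\hat S_i^\ts\subseteq S_{i+1}^\ts\setminus\{s\}$ for every $i$ and every threshold vector $\ts$), finally conditioning on whether $\tn[s]\le\xn[s]$. You instead work node by node along a topological order, proving the marginal identity $q_\xs(w)=\sum_v\xn\,p_v(w)$ directly from the linear recurrence $q_\xs(w)=\xn[w]+\sum_{u\in\nbdin(w)}\w{uw}\,q_\xs(u)$ that governs activation probabilities of non-saturated nodes in a DAG, with the no-path hypothesis disposing of saturated nodes separately. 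Your argument is more elementary and stays at the level of one-dimensional marginals; the paper's sample-path coupling is heavier but reaches the same conclusion. Both approaches agree that each coefficient $\obj(\mathbf 1_v)$ is computable by a single-source dynamic program along the topological order, which is what drives the efficiency claim.
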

\begin{proof}
We prove this by induction on the number of vertices.  In the case
that $\ns$ contains only a single vertex, the claim is trivial.
Otherwise, let $G=(V,E)$ and $\xs$ satisfy our assumptions, with
$\abs{V}=n>1$, and assume out claim holds for any DAG with $(n-1)$ or
fewer nodes.  Let $s\in\ns$ be a source vertex (i.e. have in-degree
$0$) in $G$.  Now, if $s\notin\Infl(\xs)$, we know that $s$ is never
activated.  Let $\hat{\obj}$ and $\hat{\xs}$ be $\obj$ on $G$
restricted to $\ns\setminus s$ and $\xs$ restricted to $\ns \setminus
s$, respectively, and observe that we may apply our induction
hypothesis to $\hat{\obj}(\hat{\xs})$ since removing $s$ from $G$
cannot cause any of the requirements for our theorem to become
violated.  Thus, since $\xn[s]=0$, we can see that
\begin{equation*}
  \obj(\xs) 
  = \hat{\obj}(\hat{\xs})
  =\sum_{v \in \ns \setminus s}\xn\hat{\obj}({\mathbf 1}_v)
  =\sum_{v \in \ns}\xn\obj({\mathbf1}_v).
\end{equation*}

Now, assume that $s\in\Infl(\xs)$.  Recall that our conditions on
$G$ ensures it contains no path from $s$ to any
elements of $\Sat(\xs)$.  Furthermore, this implies 
none of the nodes in ${\nbdout(s)}$ have paths to elements of
$\Sat(\xs)$ either, and so applying influence to them does not
violate the assumptions of our inductive hypothesis, as long as
we ensure we do not apply enough influence to (over-)saturate them. 

\newcommand{\hatxn}[1][\n]{\hat{x}_{#1}}
\newcommand{\hatyn}[1][\n]{\hat{y}_{#1}}

In order to prove our claim, we focus on $G$ restricted to
$V\setminus\{s\}$, call it $\hat{G}$.  Let $\hat{\obj}$ be $\obj$ over
$\hat{G}$, and consider the following two influence vectors for
$\hat{G}$.  Define $\hat{\xs}$ to simply be the restriction of $\xs$
to $\hat{G}$; define $\hat{\ys}$ by $\hatyn=\w{sv}$ if $v\in\nbdout(s)$
and $0$ otherwise.  Letting $\hat{\Infl}$ and $\hat{\Sat}$ be $\Infl$
and $\Sat$, respectively, restricted to $\hat{G}$, we have
\begin{equation}
  \label{eqn:satCompare}
  \left.
  \begin{aligned}
    \hat{\Infl}(\hat{\xs}),\hat{\Infl}(\hat{\ys}),\hat{\Infl}(\hat{\xs}+\hat{\ys})
    &\subseteq \Infl(\xs)\cup\nbdout(s)\text{, and}\\
    \hat{\Sat}(\hat{\xs}),\hat{\Sat}(\hat{\ys}),\hat{\Sat}(\hat{\xs}+\hat{\ys})
    &\subseteq \Sat(\xs).
  \end{aligned}
  \quad\right\}
\end{equation}
The observation that gives the above is that, compared to $\xs$, the
only vertices with increased influence applied to them are the
elements of $\nbdout(s)$, and the amounts of these increases are
precisely balanced by the removal of $s$ (and its outgoing edges) from
$\hat{G}$.  In particular, note that for any $v\in\ns\setminus\{s\}$,
by our definition of $\hat{\ys}$ we have that
\begin{equation*}
  \xn+\sum_{u\in\nbdin(v)}\w{uv} 
  = \hatxn+\hatyn +  \sum_{u\in\nbdin(v)\setminus\{s\}}\w{uv}.
\end{equation*}

As previously noted $G$ contains no paths from an element of
$\nbdout(s)$ to any element of $\Sat(\xs)$; this combined
with~\eqref{eqn:satCompare} allows us to conclude that we may apply
our induction hypothesis to $\hat{G}$ with any of $\hat{\xs}$,
$\hat{\ys}$, or $\hat{\xs}+\hat{\ys}$.  We proceed by showing that for
any vector $\ts$ of thresholds for $G$ (and its restriction to
$\hat{G}$), we have that the set activated under $\xs$ in $G$ always
corresponds closely to one of the sets activated by $\hat{\xs}$ or
$(\hat{\xs}+\hat{\ys})$ in $\hat{G}$.  To that end, fix any vector
$\ts$.  We consider the cases where $\xn[s]\ge\tn[s]$ and
$\xn[s]<\tn[s]$ separately.

We begin with the case where $\xn[s]<\tn[s]$, since it is the simpler
of the two.  Let $S_0^\ts,\dots,S_n^\ts$ and
$\hat{S}_0^\ts,\dots,\hat{S}_n^\ts$ denote the sets activated in $G$
under $\xs$ and in $\hat{G}$ under $\hat{\xs}$, respectively, in
stages $0,\dots,n$.  Note that since $s$ is a source, and
$\xn[s]<\tn[s]$, we know that $s\notin S_i^\ts$ for all $i$.  However,
this means that every node in $\ns\setminus\{s\}$ has both the same
direct influence applied to it under $\xs$ and $\hat{\xs}$, and the
same amount of influence applied by any activated set in both $G$ and
$\hat{G}$.  So we can immediately see that since
$S_0^\ts=\emptyset=\hat{S}_0^\ts$, by induction we will have that
$S_i^\ts=\hat{S}_i^\ts$ for all $i$, and in particular for $i=n$.

The case where $\xn[s]\ge\tn[s]$ requires more care.  Let
$S_0^\ts,\dots,S_n^\ts$ and $\hat{S}_0^\ts,\dots,\hat{S}_n^\ts$ denote
the sets activated in $G$ under $\xs$ and in $\hat{G}$ under
$\hat{\xs}+\hat{\ys}$, respectively, in stages $0,\dots,n$.  Note that
our assumption implies that $s$ will be activated by our direct
influence in the first round, and so we have $s\in\hat{S}_i^\ts$ for
all $i\ge1$.  Fix some $v\in\ns$, $v\neq s$, and let $f_v(S)$ and
$\hat{f}_v(S)$ denote the total influence -- both direct and cascading
-- applied in $G$ and $\hat{G}$, respectively, when the current
active set is $S$.  Then, for any
$S\subseteq\ns\setminus\{s\}$ we have 
\begin{equation}
  \label{eqn:2}
  \begin{aligned}
    \hat{f}_v(S)
    &=\hatxn+\hatyn+{\sum_{\substack{u\in\nbdin(v)\\ u\in S}}}\w{uv}\\
    &=\xn+{\sum_{\substack{u\in\nbdin(v)\\ u\in S\cup\{s\}}}}\w{uv}
    =f_v(S\cup\{s\}).
  \end{aligned}
\end{equation}
Furthermore, note that both $f_v$ and $\hat{f}_v$ are always monotone
nondecreasing.  While we cannot show that $S_i^\ts=\hat{S}_i^\ts$ for
all $i$ in this case, we will instead show that
$S_i^\ts\setminus\{s\}\subseteq\hat{S}_i^\ts\subseteq
S_{i+1}^\ts\setminus\{s\}$ for all $i=0,\dots,n-1$.  Recall that the
propagation of influence converges by $n$ steps.  That is, if we
continued the process for an additional step to produce activated sets
$S_{n+1}^\ts$ and $\hat{S}_{n+1}^\ts$, we would have that
$S_{n+1}^\ts=S_n^\ts$ and $\hat{S}_{n+1}^\ts=\hat{S}_{n}^\ts$.
However, our claim would extend to this extra stage as well, and so we
conclude that we must have that $S_n^\ts=\hat{S}_n^\ts\cup\{s\}$.  We
prove our claim inductively.  First, observe that it holds trivially
for $i=0$, since we have $S_0^\ts=\hat{S}_0^\ts=\emptyset$, and
previously observed that $s\in S_1^\ts$. Now, the claim holds for some
$i$.  Note, however, that by \eqref{eqn:2} and monotonicity we must
have that for all $v\in\ns$, $v\neq s$
\begin{align*}
  f_v(S_i^\ts)
  &=\hat{f}_v(S_i^\ts\setminus\{s\})
  \le\hat{f}_v(\hat{S}_i^\ts) \\
  &\le\hat{f}_v(S_{i+1}^\ts\setminus\{s\})
  =f_v(S_{i+1}^\ts).
\end{align*}
From the above, we can conclude that
$S_{i+1}\setminus\{s\}\subseteq\hat{S}_{i+1}^\ts\subseteq
S_{i+2}^\ts\setminus\{s\}$ since such a $v$ in included in each of the
above sets if and only if $f_v(S_i^\ts)$, $\hat{f}_v(\hat{S}_i^\ts)$,
or $f_v(S_{i+1}^\ts)$, respectively, exceeds $\tn$.

Thus, by observing that $\tn[s]$ is an independent draw from
$\unif[0,1]$, we can see that taking expectations over $\ts$ and
conditioning on which of $\tn[s]$ and $\xn[s]$ is larger gives us that
\begin{align*}
  \obj(\xs) 
  &= (1-\xn[s])\hat{\obj}(\hat{\xs}) + \xn[s](1+\hat{\obj}(\hat{\xs}+\hat{\ys}))\\
  &= \sum_{\substack{v\in\ns\\v\neq s}}\xn\obj({\mathbf1}_v)+\xn[s](1+\hat{\obj}(\hat{\ys})).
\end{align*}
We complete our proof by observing that $\obj({\mathbf1}_s)$ is
precisely equal to $1+\hat{\obj}(\hat{\ys})$.  We can show this, once again,
by coupling the activated sets under any vector $\ts$ of
thresholds.  In particular, let $S_0^\ts,\dots,S_n^\ts$ and
$\hat{S}_0^\ts,\dots,\hat{S}_n^\ts$ denote the sets activated in $G$
under ${\mathbf 1}_s$ and in $\hat{G}$ under $\hat{\ys}$, respectively, in
stages $0,\dots,n$.  Arguments identical to those made
above allow us to conclude that for all $i$, we have 
$S_{i+1}^\ts=\hat{S}_i^\ts\cup\{s\}$.  Thus, by again noting that
influence cascades converge after $n$ steps we have
$S_{n}^\ts=\hat{S}_n^\ts\cup\{s\}$, and taking expectations with
respect to $\ts$ gives precisely the desired equality.

Since we have that $\obj$ is linear, we can maximize $\obj(\xs)$ by 
greedily applying influence to the vertex with the highest $\obj({\mathbf 1}_v)$ until the budget is exhausted. Estimating $\obj({\mathbf 1}_v)$ can be done by repeating the process with influence vector ${\mathbf 1}_v$ several times, and averaging the number of activated nodes in these trials.   
\end{proof}

In particular, note that the above theorem says that if we want to
find the best set of vertices to influence among those in the first
layer of a multi-layer DAG, we can efficiently solve this exactly.\footnote{Note that the hardness results Theorem~\ref{thm:fixed_nphard} and
Corollary~\ref{cor:apphard} in the next section hold exactly for such DAG problems, but with fixed thresholds instead of uniform ones.}

\iffull
We may also express our optimization problem on DAGs in the integral
case as the following MIP:
\begin{alignat*}{6}
  \mathclap{\text{maximize} \sum_v (X_v+Y_v) \text{ subject to}}\\
  X_v+Y_v &\le1 &\qquad&\forall v\\
  Y_v - \sum_{u\in\nbdin(v)}\w{uv}(Y_u+X_u)&\le0&&\forall v\\
  \sum_v X_v &\le K\\
  X_v&\in\{0,1\} &&\forall v\\
  Y_v &\in [0,1] && \forall v
\end{alignat*}
\fi


\section{Hardness}
\label{sec:hard}
\newcommand{\layer}[1]{L_{#1}}
In this section, we present NP-hardness and inapproximability results in the linear influence model. 
We assume that thresholds are not chosen from a distribution, and they are fixed and given as part of the input. 
We note that this is the main assumption that makes our problem intractable, and to achieve reasonable algorithms, 
one has to make some stochastic (distributional) assumptions on the thresholds. 
In Section~\ref{sec:dag}, we introduced the linear influence model as a special case of the fractional influence model, but it
makes sense to define it as a special case of the integral influence model as well. 
In the fractional linear influence model, we are allowed to apply any influence vector  $\xs \in [0,1]^n$ on nodes. 
By restricting the influence vector $\xs$ to be in $\{0,1\}^n$ (a binary vector), 
we achieve the integral version of the linear influence model. 
Our hardness results in Theorem~\ref{thm:fixed_nphard} and Corollary~\ref{cor:apphard} work for both fractional and integral versions of the linear influence model. 
We start by proving that the linear influence model is NP-hard with a reduction from Independent Set  in Theorem~\ref{thm:fixed_nphard}. 
We strengthen this hardness result in Corollary~\ref{cor:apphard} by showing that an $n^{1-\epsilon}$ approximation algorithm for the linear influence problem 
yields an exact algorithm for it as well for any constant $\epsilon > 0$, and therefore even  an $n^{1-\epsilon}$ approximation algorithm is NP-hard to achieve. 
At the end, we show that it is NP-hard to achieve any approximation factor better than $1-1/e$  in the Triggering model (a generalization of the linear threshold model introduced in \cite{kkt03}). 
We will elaborate on the Triggering Model and this hardness result at the end of this section. 
\ifabstract
We note that the proofs are omitted due to lack of space. 
\else
\fi

\begin{theorem}
  \label{thm:fixed_nphard}
  If we allow arbitrary, fixed thresholds, it is NP-hard to compute
  for a given instance of the integral linear influence problem $(G,k,T)$ (graph $G$, budget $k$, and a target goal $T$)
  whether or not there exists a set $S$ of $k$ vertices in $G$
  such that $\sigma(S)\ge T$.  Furthermore, the same holds in the
  fractional version of the problem (instead of a set $S$ of size $k$, we should look for a influence vector with $\ell_1$ norm equal to $k$ in the fractional case).  Additionally, this holds even when $G$ is a two-layer DAG and only vertices in the first layer may be influenced.
\end{theorem}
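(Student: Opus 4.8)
The plan is to reduce from the NP-hard Maximum Independent Set problem. Given an instance $(H,k)$ with $H=(V_H,E_H)$, $|V_H|=n$, and $\Delta$ the maximum degree of $H$, I would build a two-layer DAG $G$ together with fixed thresholds and edge weights, and set a target $T$, so that $G$ admits a size-$k$ seed set (or, in the fractional case, an influence vector of $\ell_1$-norm $k$) with objective at least $T$ if and only if $H$ has an independent set of size $k$. The first layer of $G$ contains one node $u_i$ for each $i\in V_H$, and these are the only influenceable nodes. The second layer contains, for each edge $(i,j)\in E_H$, a shared node $s_{ij}$ with incoming edges from both $u_i$ and $u_j$ of weight $1/2$ and threshold $1/2$ (so $s_{ij}$ fires exactly when at least one endpoint is active, while respecting $\sum_u w_{u,s_{ij}}=1$), and, for each $i$, a collection of $\Delta-\deg(i)\ge 0$ private nodes, each with a single incoming edge from $u_i$ of weight $1$ and threshold $1$. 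Crucially, every first-layer node $u_i$ is given threshold exactly $1$.

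The heart of the argument is a clean formula for the objective. With threshold $1$, a source $u_i$ activates if and only if it receives a full unit of direct influence; hence for any activated first-layer set $S$ the private nodes of each $i\in S$ all fire, and a shared node $s_{ij}$ fires iff $i\in S$ or $j\in S$. Counting the activated nodes gives
\[
  \sigma(S) \;=\; |S| \;+\; \sum_{i\in S}(\Delta-\deg(i)) \;+\; \Big(\sum_{i\in S}\deg(i)-m_S\Big) \;=\; |S|\,(1+\Delta)-m_S,
\]
where $m_S=|\{(i,j)\in E_H : i,j\in S\}|$ is the number of internal edges of $S$; the middle term is inclusion--exclusion, since an edge with both endpoints in $S$ yields one activated shared node but is counted twice in $\sum_{i\in S}\deg(i)$. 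Taking $T=k(1+\Delta)$, I would then observe that $\sigma(S)=|S|(1+\Delta)-m_S\le k(1+\Delta)$ whenever $|S|\le k$, with equality if and only if $|S|=k$ and $m_S=0$, i.e.\ $S$ is an independent set of size $k$. This establishes the equivalence for the integral problem.

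For the fractional version I would argue that the extra freedom buys nothing, which is exactly where the threshold-$1$ choice pays off. Since only first-layer nodes may be influenced, and each is a source that activates only upon receiving a full unit of influence, any influence vector $\mathbf{x}$ with $\|\mathbf{x}\|_1\le k$ activates at most $k$ first-layer nodes, and any influence placed on a coordinate with $x_{u_i}<1$ is simply wasted. Thus the activated set is again a first-layer set $S$ with $|S|\le k$, and the identity above gives $\sigma(\mathbf{x})=|S|(1+\Delta)-m_S\le k(1+\Delta)=T$, with equality iff $S$ is a size-$k$ independent set. The construction is a two-layer DAG with influence confined to the first layer, as required.

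I expect the main obstacle to be ruling out that fractional allocation (spreading partial activation across many first-layer nodes) could beat the integral optimum; this is neutralized by the all-or-nothing threshold of $1$ on the sources, which makes partial influence useless and collapses the fractional problem back onto the integral one. A secondary point worth care is eliminating any dependence on vertex degrees in the objective, which the $\Delta-\deg(i)$ private nodes accomplish, letting me reduce from Independent Set on arbitrary graphs rather than invoking a regularity assumption.
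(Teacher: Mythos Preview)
Your proposal is correct and follows the same high-level strategy as the paper: a reduction from Independent Set via a two-layer DAG in which layer~1 corresponds to vertices of the IS instance, each edge of the IS instance becomes a shared layer-2 node activated by either endpoint, and extra layer-2 nodes pad out the influence so that the objective takes the form $c\cdot|S|-m_S$.

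The differences are in the details. The paper pads using \emph{non-edges}: for every $\{u,v\}\notin E$ it adds two private layer-2 nodes, so every layer-1 vertex has exactly $n-1$ neighbors and $\sigma(S)=n|S|-m_S$, with all thresholds and edge weights set uniformly to~$1/2$. You instead pad with $\Delta-\deg(i)$ private nodes per vertex, getting $\sigma(S)=(1+\Delta)|S|-m_S$; your construction is smaller ($O(n\Delta)$ versus $O(n^2)$ layer-2 nodes). For the fractional case the paper argues that, with everything equal to~$1/2$, optimal influence per node is either $0$ or~$1/2$, and then maps a budget-$k/2$ fractional instance to a budget-$k$ integral one. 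Your choice of threshold~$1$ on the sources makes this step completely transparent: partial influence on a source is literally wasted, so the fractional problem with budget~$k$ collapses directly to the integral one with the same budget. Your argument here is crisper, at the cost of non-uniform parameters (thresholds $1$ on layer~1 and private nodes, $1/2$ on shared nodes).
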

\iffull
\begin{proof}
  We show hardness by reducing from Independent Set.  Given a problem
  instance $(G,k)$ of IS, we construct a two-layer DAG as follows.
  Let $G=(V,E)$ denote the vertices and (undirected) edges of $G$.
  The first layer $\layer{1}$ consists of one vertex for every vertex
  $v\in{V}$; we abuse notation and refer to the vertex in $\layer{1}$
  corresponding to $v\in{V}$ as $v$ as well.  The second layer
  contains vertices based on the edges in $E$.  For each unordered
  pair of vertices $\{u,v\}$ in $V$, we add vertices to the second
  layer $\layer{2}$ based on whether $\{u,v\}$ is an edge in $G$: if
  $\{u,v\}\in{E}$, then we add a single vertex to $\layer{2}$ with
  (directed) edges from each of $u,v\in\layer{1}$ to it; if
  $\{u,v\}\notin{E}$, then we add two vertices to $\layer{2}$, and add
  (directed) edges going from $u\in\layer{1}$ to the first of these
  and from $v\in\layer{1}$ to the second of these.  We set all
  activation thresholds and all edge weights in our new DAG to $1/2$.
  We claim that there exists a set $S\subseteq\layer{1}\cup\layer{2}$
  satisfying $\abs{S}\le{k}$ and $\sigma(S)\ge{kn}$ if and only if $G$
  has an independent set of size $k$.

  First, we note that in our constructed DAG, sets
  $S\subseteq\layer{1}$ always dominate sets containing elements
  outside of $\layer{1}$, in the sense that for any
  $T\subseteq\layer{1}\cup\layer{2}$ there always exists a set
  $S\subseteq\layer{1}$ such that $\abs{S}\le\abs{T}$ and
  $\sigma(S)\ge\sigma(T)$.  Consider an arbitrary such $T$.  Now,
  consider any vertex $v\in T\cap\layer{2}$.  By construction, there
  exists some $u\in\layer{1}$ such that $(u,v)$ is an edge in our DAG.
  Note that $\abs{T\setminus\{v\}\cup\{u\}}\le\abs{T}$ and
  $\sigma(T\setminus\{v\}\cup\{u\})\ge\sigma(T)$.  Thus, if we
  repeatedly replace $T$ with $T\setminus\{v\}\cup\{u\}$ for each such
  $v$, we eventually with have the desired set $S$.

  With the above observation in hand, we can be assured that there
  exists a set $S$ of $k$ vertices in our constructed DAG such that
  $\sigma(S)\ge nk$ if and only if there exists such an
  $S\subseteq\layer{1}$.  Recall how we constructed the second layer
  of our DAG: each vertex $v\in\layer{1}$ has precisely $(n-1)$
  neighbors; and two vertices $u,v\in\layer{1}$ share a neighbor if
  and only if they are neighbors in the original graph $G$, in which
  case they have exactly one shared neighbor.  Thus, we can see that
  for any set of vertices $S\subseteq\layer{1}$, we have that
  \begin{equation*}
    \sigma(S)
    =
    n\abs{S}-\abs{\{\{u,v\}\in E: u,v\in S\}}.
  \end{equation*}
  Thus, we can see that for any set $S\subseteq\layer{1}$, we have
  $\sigma(S)\ge n\abs{S}$ if and only if $\{u,v\}\notin E$ for any
  $u,v\in S$, i.e.~$S$ is an independent set in $G$.  The main
  claim follows.

  Furthermore, recall that in our constructed DAG, every edge weight
  and threshold was exactly equal to $1/2$.  It is not hard to see,
  therefore, that in the fractional case it is never optimal to place
  an amount of influence on a vertex other than $0$ or $1/2$.  It
  follows, therefore, that there is a $1-1$ correspondence between
  optimal optimal solutions in the integral case with budget $k$ and
  in the fractional case with budget $k/2$.  Thus, as claimed, the
  hardness extends to the fractional case.
  \end{proof}
\fi

\begin{corollary}\label{cor:apphard}
  If we allow arbitrary, fixed thresholds, it is NP-hard to
  approximate the linear influence problem to within a factor of
  $n^{1-\eps}$ for any $\eps>0$.  Furthermore, the same holds for the
  fractional version of our problem.
Additionally, this holds even when $G$ is a three-layer DAG and only vertices in the first layer may be influenced.
\end{corollary}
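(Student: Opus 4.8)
The plan is to amplify the additive gap produced by the reduction in Theorem~\ref{thm:fixed_nphard} into a multiplicative gap of size $n^{1-\eps}$, by appending a third layer of reward vertices that become activated only in the ``yes'' case. Recall from that construction that, for an Independent Set instance $(G,k)$ with $|V(G)|=n$, a size-$k$ seed set confined to the first layer $L_1$ activates exactly $(n-1)k$ vertices of the second layer $L_2$ when $G$ has an independent set of size $k$, and at most $(n-1)k-1$ vertices of $L_2$ in every other case.

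First I would add a third layer $L_3=R$ consisting of $N$ reward vertices, where $N$ is a polynomial in $n$ to be fixed last. I would connect every vertex of $R$ to all of $L_2$, giving each such edge a small uniform weight $1/M$ with $M\ge|L_2|$ (so the incoming weights sum to at most one) and giving every reward vertex the threshold $(n-1)k/M$. Each reward vertex is then a threshold gate that fires precisely when at least $(n-1)k$ of the $L_2$ vertices are active. By the recalled property, in a yes-instance some independent set activates exactly $(n-1)k$ vertices of $L_2$, firing all of $R$ and yielding $\sigma\ge N$; in a no-instance fewer than $(n-1)k$ vertices of $L_2$ are ever active, so no reward vertex fires through the cascade and $\sigma\le|L_1|+|L_2|=\poly(n)$. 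The graph is manifestly a three-layer DAG and influence is confined to $L_1$, matching the statement. Writing $\tilde n=n+|L_2|+N$ for the size of the new instance, the yes/no objective values differ by a factor of at least $N/\poly(n)$; choosing $N=n^{c}$ with $c=c(\eps)$ a sufficiently large constant makes $\tilde n=\Theta(n^{c})$ and pushes this ratio above $\tilde n^{1-\eps}$. Consequently any $\tilde n^{1-\eps}$-approximation algorithm must return values on opposite sides of the gap for yes- and no-instances, hence decides Independent Set exactly, which proves NP-hardness of $n^{1-\eps}$-approximation.

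Two points require care, and the second is where the real work lies. For the fractional version I would reuse the observation from the proof of Theorem~\ref{thm:fixed_nphard} that on the $1/2$-uniform two-layer gadget only influence values in $\{0,1/2\}$ are ever useful, so an $\ell_1$-budget of $k$ behaves like $k$ integral seeds. The genuine obstacle is to rule out \emph{cheap triggering of the gate in a no-instance}: when influence is not confined to $L_1$ (the general statement), or when fractional influence may be sprinkled across $R$, an adversary might try to nudge many reward vertices over threshold directly rather than through the cascade. Here I would argue that, whenever fewer than $(n-1)k$ of $L_2$ are active, each reward vertex is still short of its threshold by at least $1/M$, so forcing one directly costs at least $1/M$ of budget and at most $kM=\poly(n)$ of them can be forced. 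Combined with the fact that no budget-$k$ allocation can make $(n-1)k$ vertices of $L_2$ active in a no-instance — each seeded $L_1$ vertex reaches at most $n-1$ of them, and directly seeding $L_2$ or $R$ is no more efficient — the no-instance value stays $\poly(n)$ and the gap survives. Verifying this robustly for the fractional, unrestricted-influence case, i.e.\ that no mixed strategy beats the clean bound, is the step I expect to demand the most careful accounting.
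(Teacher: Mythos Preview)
Your approach is essentially identical to the paper's: append a layer of $N$ reward vertices, each a threshold gate on the earlier layers, and pick $N$ polynomially large so the unit additive gap from Theorem~\ref{thm:fixed_nphard} becomes an $n^{1-\eps}$ multiplicative gap; the paper wires the rewards to \emph{all} vertices of the two-layer gadget (threshold $T/n$) rather than just $L_2$, but this is cosmetic. One slip to fix: in the fractional case the $\ell_1$-budget must be $k/2$, not $k$ --- with thresholds $1/2$, a budget of $k$ buys $2k$ half-seeds in $L_1$, and for $n$ large relative to $k$ any $2k$ first-layer vertices already activate $\ge (n-1)k$ of $L_2$ and fire the gate even in a no-instance; Theorem~\ref{thm:fixed_nphard} (and the paper here) uses budget $k/2$, after which your accounting goes through.
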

\iffull
\begin{proof}
  We show that given an instance $(G,k)$ of Target Set Selection, and
  a target $T$, we can construct a new instance $(G',k)$, such that if
  we can approximate the optimal solution for the new instance
  $(G',k)$ to within a factor of $n^{1-\eps}$, then we can tell
  whether the original instance $(G,k)$ had a solution with objective
  value at least $T$.  The claim then follows by applying
  Theorem~\ref{thm:fixed_nphard}.  
  
  Fix some $\delta>0$.  Let $n$ be the number of vertices in $G$.  Note
  that we must have that $0<k<T\le n$, since if any one of these
  inequalities fails to hold, the question of whether or not $(G,k)$
  has a solution with objective value at least $T$ can be answered
  trivially.  Let $N=\lceil{(2n^2)^{1/\delta}}\rceil$; we construct $G'$
  from $G$ by adding $N$ identical new vertices to it.  Let $v$ be one
  of our new vertices.  For every vertex $u$ that was present in $G$,
  we add an edge from $u$ to $v$ in $G'$, with weight $1/n$.  We set
  the threshold of $v$ to be precisely $T/n$.

  Consider what the optimal objective value in $(G,k)$ implies about
  the optimal objective value in $(G',k)$.  If there exists some
  solution to the former providing objective value at least $T$, then
  we can see that the same solution will activate every one of the new
  vertices in $G'$ as well, and so produce an objective value of at
  least $T+N$.  On the other hand, assume every solution to $G$
  has objective value strictly less than $T$.  Note that in the case
  of fixed thresholds, the activation process is deterministic, and so
  we may conclude that every solution has objective value at most
  $T-1$.  Now, this means that no matter what choices we make in $G'$
  about the vertices inherited from $G$, every one of the new vertices
  will require at least $1/n$ additional influence to become
  activated.  Thus, no solution for $(G',k)$ can achieve objective
  value greater than $(T-1)+kn$, in either the integral or
  fractional case.  By our choice of $N$, however, we can then
  conclude that the optimal solution for $(G',k)$ in the first case has value at least 
  \begin{equation*}
    T+N > N \ge (2n^2)^{1/\delta} > (T-1+kn)^{1/\delta},
  \end{equation*}
  the value of the optimal solution for $(G',k)$ in the latter case
  raised to the power of $1/\delta$.  Thus, for any fixed $\eps>0$, we
  can choose an appropriate $\delta>0$ such the new instance $(G',k)$
  has increased in size only polynomially from $(G,k)$, but applying
  an $n^{1-\eps}$ approximation to $(G',k)$ will allow us to
  distinguish whether or not $(G,k)$ had a solution with objective
  value at least $T$, exactly as desired.
\end{proof}
\fi

Before stating Theorem~\ref{thm:trigger}, we  define the triggering model introduced in \cite{kkt03}. 
In this model, each node $v$ independently chooses
a random triggering set $T_v$ according to some distribution
over subsets of its neighbors. To start the process, we target
a set A for initial activation. After this initial iteration, an
inactive node v becomes active in step $t$ if it has a neighor
in its chosen triggering set $T_v$ that is active at time $t-1$.
For our purposes, the distributions of triggering sets have support size one (deterministic triggering sets).
We also show that our hardness result even holds when the size of these sets is two.  

\begin{theorem}\label{thm:trigger}
  It is NP-hard to approximate the linear influence problem to within any
  factor better than $1-1/e$, even in the Triggering model where
  triggering sets have size at most $2$.  Furthermore, this holds even when $G$ is a two-layer DAG and only nodes in the first layer may be influenced.
\end{theorem}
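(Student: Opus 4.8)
The plan is to reduce from Max-$k$-Cover, for which it is NP-hard to obtain any approximation ratio better than $1-1/e$ (Feige's inapproximability result); since the reduction will make the influence objective a monotone affine function of the number of covered elements, this hardness will transfer directly to the linear influence problem. Given a Max-$k$-Cover instance with sets $\mathcal{S}=\{S_1,\dots,S_m\}$ over a universe of $n$ elements, I would build a two-layer DAG whose first layer $\layer{1}$ has one node per set (these being the only influenceable nodes, consistent with the stated restriction) and whose second layer $\layer{2}$ has one node per element. Under uniform thresholds, applying full influence to a first-layer node activates it with certainty, and the expected objective splits as the number of seeded set-nodes plus the expected number of activated element-nodes; by scaling so that $n\gg k$, the seed contribution is lower order and $\obj$ is, up to negligible terms, exactly the number of covered elements.

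For the base case of \emph{deterministic triggering sets of unbounded size}, I would let each element node's triggering set $T_v$ be precisely the collection of set-nodes whose sets contain that element. Then an element node activates if and only if at least one of its covering sets is seeded, so the process computes the exact coverage indicator and $\obj(A)=|A|+|\{\text{elements covered by }A\}|$. Completeness follows since $k$ sets covering the universe activate all $n$ element-nodes; soundness follows since, if no $k$ sets cover more than a $(1-1/e+\eps)$ fraction, then no budget-$k$ influence can activate more than that fraction (plus the negligible $k$). Hence a better-than-$(1-1/e)$ approximation to $\obj$ would separate the two cases, contradicting Feige's bound.

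The main obstacle is the \emph{size at most $2$} strengthening: reproducing this coverage behavior while every triggering set has size at most two and the DAG stays two-layered. A purely deterministic two-layer fan-in-two circuit can only OR two set-nodes per element-node, which collapses to a bounded-occurrence (vertex-cover-like) instance that is \emph{not} $(1-1/e)$-hard; the resolution must exploit the uniform thresholds to build a \emph{soft} OR. I would replace each logical element by a bundle of auxiliary second-layer copies, each copy carrying an in-degree-two triggering set (a pair of the element's covering set-nodes) together with carefully tuned edge weights, so that the \emph{expected} number of copies in a bundle that activate, viewed as a function of how many of the element's sets are seeded, reproduces the coverage indicator up to a negligible additive error. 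The delicate part is choosing the bundle sizes and weights so that (i) the gadget fits in two layers with fan-in two, (ii) the marginal coverage from seeding additional sets diminishes in a controlled way, and (iii) both directions of the $(1-1/e)$ gap survive --- in particular, that an algorithm cannot harvest extra objective value from the partial activations of the gadget copies in a way that circumvents the coverage bound. Verifying that this soft-OR bookkeeping erodes the gap in neither direction is where I expect the real work to lie; the coverage-to-influence translation and the reduction to first-layer seeds are otherwise routine.
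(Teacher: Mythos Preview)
Your base reduction from Max-$k$-Cover is exactly what the paper does: a two-layer DAG with one first-layer node per set and (many copies of) one second-layer node per element, with each element-node's deterministic triggering set being precisely the set-nodes for sets containing it. The scaling argument (replicate each element $N$ times so that the additive $k$ from seeded nodes becomes negligible) is also the paper's.

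Where you diverge is the fan-in-two strengthening, and here you are working much harder than necessary. You have correctly observed that in the Triggering model with a deterministic triggering set, a node activates if and only if \emph{some} member of its triggering set is active; but you then reject the obvious deterministic fix and instead propose a delicate ``soft OR'' gadget exploiting uniform thresholds, whose correctness you admit is unresolved. The paper's route is far simpler: since activation is literally an OR over the triggering set, one replaces each large-fan-in OR by a binary tree of depth $\log f$ of fan-in-two OR gates inserted in front of the element-nodes. This is entirely deterministic, introduces at most $m\log m$ extra intermediate vertices (which is absorbed by bumping $N$), and immediately gives triggering sets of size at most two. No probabilistic bookkeeping, no worry about partial activations leaking objective value.

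The price the paper pays is that the OR-tree construction adds intermediate layers, so the ``two-layer'' and ``triggering sets of size at most two'' claims in the theorem statement are not achieved simultaneously; the paper first proves the result for a genuine two-layer DAG with unbounded triggering sets, and then separately shows how to reduce fan-in to two at the cost of extra layers. Your attempt to get both at once is what led you into the soft-OR quagmire. If you only need to match the paper, drop the soft-OR idea and use the binary OR-tree.
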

\iffull
\begin{proof}
  We prove this by reducing from the Max Coverage problem, which is
  NP-hard to approximate within any factor better than $1-1/e$.  Let
  $(\mathcal{S},k)$ be an instance of Max Coverage, where
  $\mathcal{S}=\{S_1,\dots,S_m\}$ and $S_j\subseteq[n]$ for each
  $j=1,\dots,m$.  We begin by showing a reduction to an instance of
  Target Set Selection in the Triggering model; later, we argue that
  we can do so while ensuring that triggering sets have size at most
  $2$.  

  We construct a two layer DAG instance of Target Set Selection as
  follows.  First, fix a large integer $N$; we will pick the exact
  value of $N$ later. The first layer $\layer{1}$ will contain $m$
  vertices, each corresponding to one of the sets in $\mathcal{S}$.
  The second layer $\layer{2}$ contains $nN$ vertices, $N$ of which
  correspond to each $i\in[n]$.  We add directed edges from the vertex
  in $\layer{1}$ corresponding to $S_j$ to all $N$ vertices in
  $\layer{2}$ corresponding to $i$ for each $i\in S_j$.  We set all
  thresholds and weights in the DAG to $1$.  Note that this
  corresponds exactly to the triggering model, where each vertex in
  the first layer has an empty triggering set and each vertex in the
  second layer has a triggering set consisting of exactly the nodes in
  $\layer{1}$ corresponding to $S_j$ that contain it.  This completes
  the description of the reduction.

  Now, we consider the maximal influence we can achieve by selecting
  $k$ vertices in our constructed DAG.  First, we note that we may
  assume without loss of generality that we only consider choosing
  vertices from $\layer{1}$.  This is because we can only improve the
  number of activated sets by replacing any vertex from $\layer{2}$
  with a vertex that has an edge to it; if we have already selected
  all such vertices, then we can simply replacing it with an arbitrary
  vertex from $\layer{1}$ and be no worse off.  Note, however, that if
  we select some set $W\subseteq\layer{1}$ of vertices to activate, we
  will have that
  \begin{equation*}
    \sigma(W)=\abs{W}+N\abs{\cup_{j\in W}S_j}.
  \end{equation*}
  Let $W^\ast\in\argmax_{W\subset\layer{1}:\abs{W}\le k}\sigma(W)$.
  Now, if we have an $\alpha$-approximation algorithm for Target Set
  Selection, we can find some $W\subseteq\layer{1}$ such that
  $\abs{W}\le k$ and $\sigma(W)\ge\alpha\sigma(W^\ast)$.  But this
  means that
  \begin{align}
    \notag
    \abs{W}+N\abs{\cup_{j\in W}S_j} 
    &\ge 
    \alpha\left(\abs{W^\ast}+N\abs{\cup_{j\in W^\ast}S_j}\right),
    \intertext{which implies that}
    \abs{\cup_{j\in W}S_j} 
    &\ge 
    \label{eqn:approx_loss}
    \alpha\abs{\cup_{j\in W^\ast}S_j} - m/N.
  \end{align}
  Thus, for any $\eps>0$, by picking $N=\lceil{m/\eps}\rceil$ we can
  use our $\alpha$-approximation algorithm for Target Set Selection to
  produce an $\alpha$-approximation for Max Coverage with an additive
  loss of $\eps$.  Since the objective value for our problem is
  integral, we may therefore conclude it is NP-hard to approximate
  Target Set Selection within a factor of $1-1/e$.

  In the above reduction, our targeting sets could be as large as $m$.
  We know show that we can, in fact, ensure that no targeting set has
  size greater than $2$.  In particular, the key insight is that
  activation effectively functions as an OR-gate over the targeting
  set.  We can easily replace an OR-gate with fan-in of $f$ by a tree
  of at most $\log(f)$ OR-gates, each with fan-in $2$.  It is easy to
  see that if we add such trees of OR-gates before $\layer{2}$, we
  increase the loss term in Equation~\eqref{eqn:approx_loss} to at
  most $m\log(m)/N$.  We can easily offset this by increasing $N$
  appropriately, and so retain our conclusion even when targeting sets
  have size at most $2$.
\end{proof}
\fi

\section{Experimental Results}
\newcommand{\D}{d^{-}}
\newcommand{\inD}{\Gamma^{+}}
\newcommand{\outD}{\Gamma^{-}}
\newcommand{\td}{td}
\newcommand{\hept}{\textsf{NetHEPT}}
\newcommand{\phy}{\textsf{NetPHY}}
\newcommand{\facebook}{\textsf{Facebook}}
\newcommand{\amazon}{\textsf{Amazon}}
\newcommand{\emailnet}{\textsf{EnronEmail}}
\newcommand{\opinion}{\textsf{Epinions}}

\newcommand{\degreeInt}{\textsf{DegreeInt}}
\newcommand{\degreeFrac}{\textsf{DegreeFrac}}
\newcommand{\discountFrac}{\textsf{DiscountFrac}}
\newcommand{\discountInt}{\textsf{DiscountInt}}
\newcommand{\randomAlg}{\textsf{RandomInt}}
\newcommand{\uniformAlg}{\textsf{UniformFrac}}
\newcommand{\imw}{7.0cm}
\newcommand{\imh}{4.1cm}


{\bf Datasets. } 
We use the following real-world networks for evaluating our claims.
Table~\ref{table:net} gives some statistical information about these networks.

\begin{table}[b]
\centering
\begin{tabular}{c|r|r|c|c}
Network    & $\#$ nodes &  $\#$ edges  & Avg.\ deg.  & Directed\\
\hline
\hept\ & 15,233 & 58,891 & \phantom{0}7.73  & No\\ 
\phy\ & 37,154 & 231,584 & 12.46 & No\\
\facebook\ & 4,039 & 88,234 & 21.84 & No\\ 
\amazon\ & 262,111 & 1,234,877 & \phantom{0}4.71 & Yes\\ 
\end{tabular}
\caption{Information about the real-world networks we use.}
\label{table:net}
\end{table}

\begin{itemize}
\item {\hept:} An academic collaboration network based on ``High Energy Physics --- Theory'' section of the e-print arXiv\footnote{\url{http://www.arXiv.org}} with papers from 1991 to 2003.
In this network, nodes represent authors and edges represent co-authorship relationships. This network is available at \url{http://research.microsoft.com/en-us/people/weic/graphdata.zip}.	

\item {\phy:} Another academic collaboration network, taken from the full
``Physics'' section of the e-print arXiv.
Again, nodes represent authors and edges represent co-authorship relationships. The network is available at \url{http://research.microsoft.com/en-us/people/weic/graphdata.zip}.	

\item {\facebook:} A surveyed portion of the Facebook friend network.
The nodes are anonymized Facebook users and edges represents friendship relationships. The data is available at \url{http://snap.stanford.edu/data/egonets-Facebook.html}.


\item {\amazon:} Produced by crawling the Amazon website based on the following observation: customers who bought product $i$ also bought product $j$. In this network, nodes represent products and there is a directed edge from node $i$ to node $j$ if product $i$ is frequently co-purchased with product $j$. This network is based on Amazon data in March 2003. The data is available at \url{http://snap.stanford.edu/data/amazon0302.html}.

\end{itemize}

\begin{figure*}[!ht]
\centering
\begin{subfigure}[b]{0.4\textwidth}
\includegraphics[width=\imw, height=\imh]{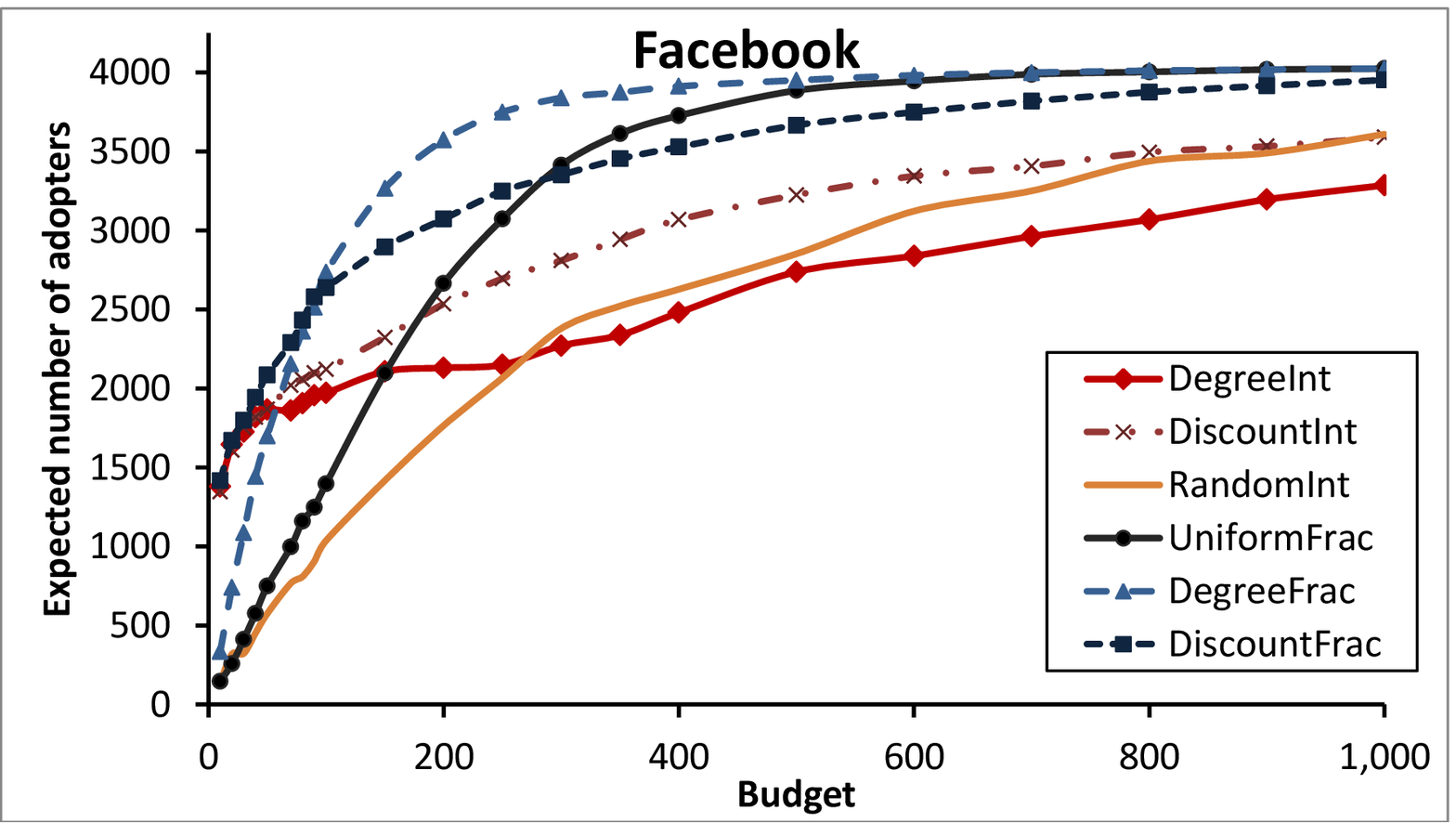} 
\label{fig:facebook_LT}
\end{subfigure}
~
\begin{subfigure}[b]{0.4\textwidth}
\includegraphics[width=\imw, height=\imh]{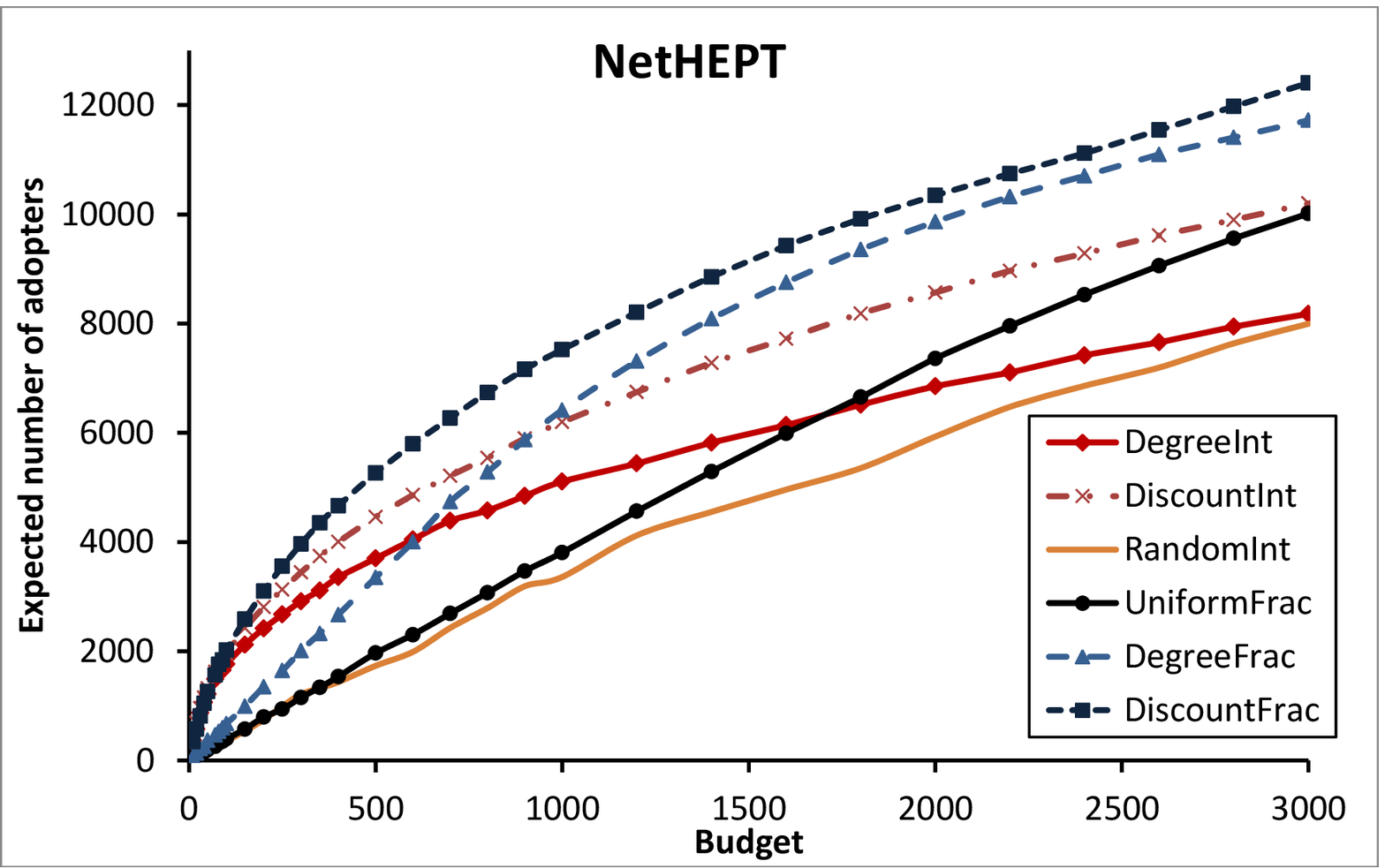} 
\label{fig:hept_LT}
\end{subfigure}
~
\begin{subfigure}[b]{0.4\textwidth}
\includegraphics[width=\imw, height=\imh]{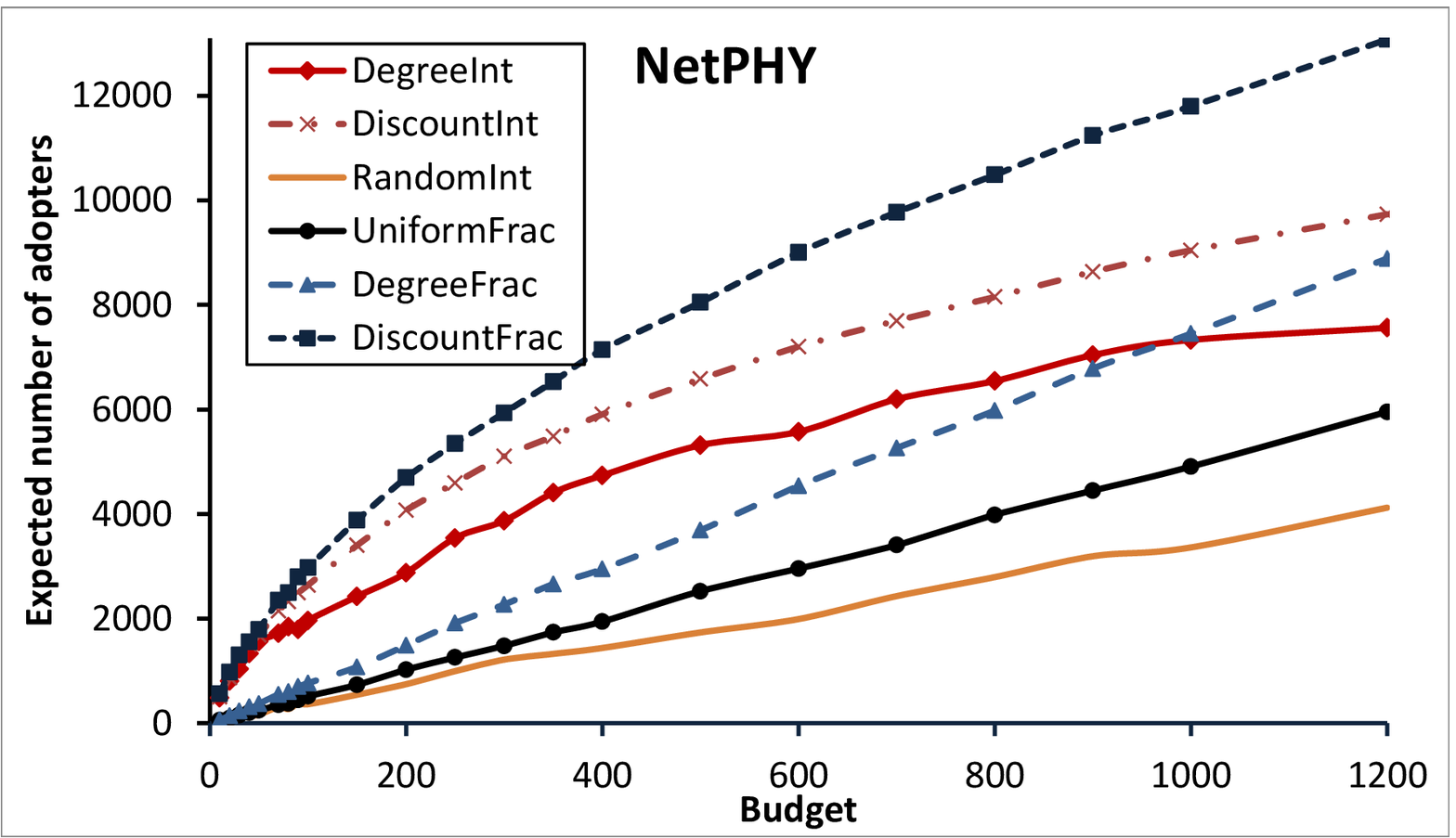} 
\label{fig:phy_LT}
\end{subfigure}
~
\begin{subfigure}[b]{0.4\textwidth}
\includegraphics[width=\imw, height=\imh]{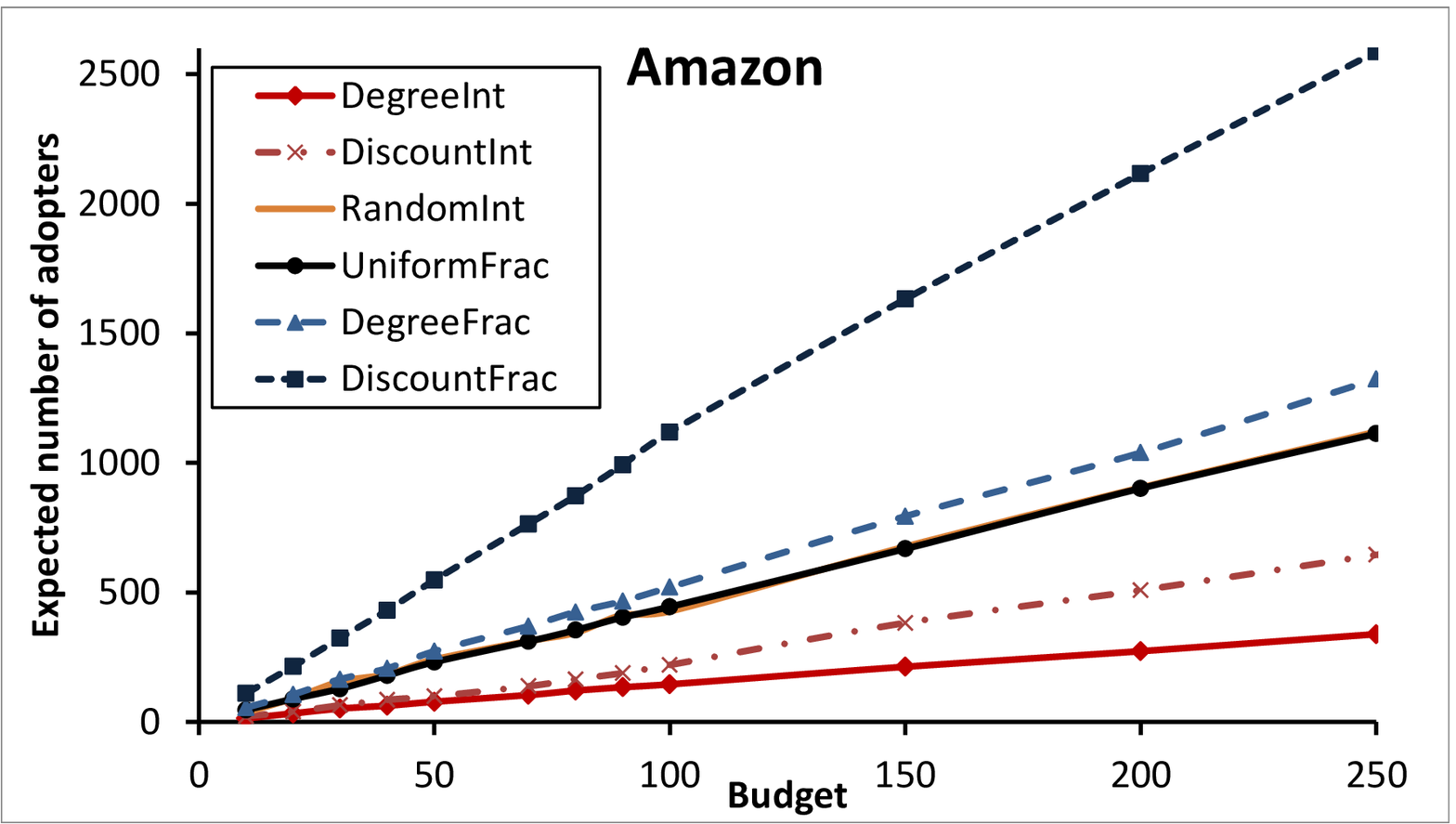} 
\label{fig:amazon_LT}
\end{subfigure}
\caption{Performance of different algorithms on \facebook, \hept, \phy, and \amazon. The weights of edges are defined based on the weighted cascade model. The $x$ axis is the budget and the $y$ axis is the expected number of adopters.}
\label{fig:LT}
\end{figure*}

\begin{figure*}[!ht]
\centering
\begin{subfigure}[b]{0.4\textwidth}
\includegraphics[width=\imw, height=\imh]{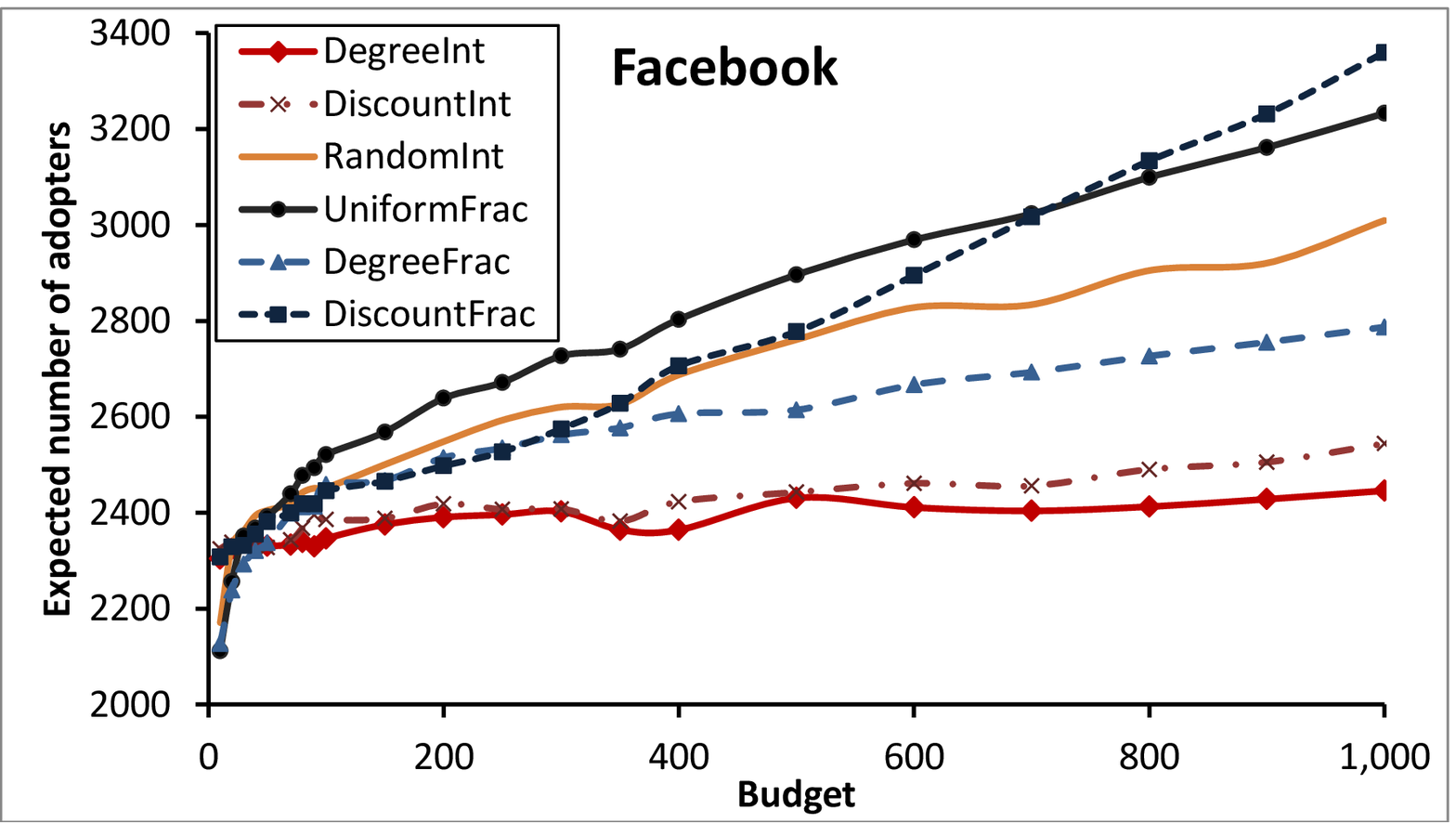} 
\label{fig:facebook_TR0}
\end{subfigure}
~
\begin{subfigure}[b]{0.4\textwidth}
\includegraphics[width=\imw, height=\imh]{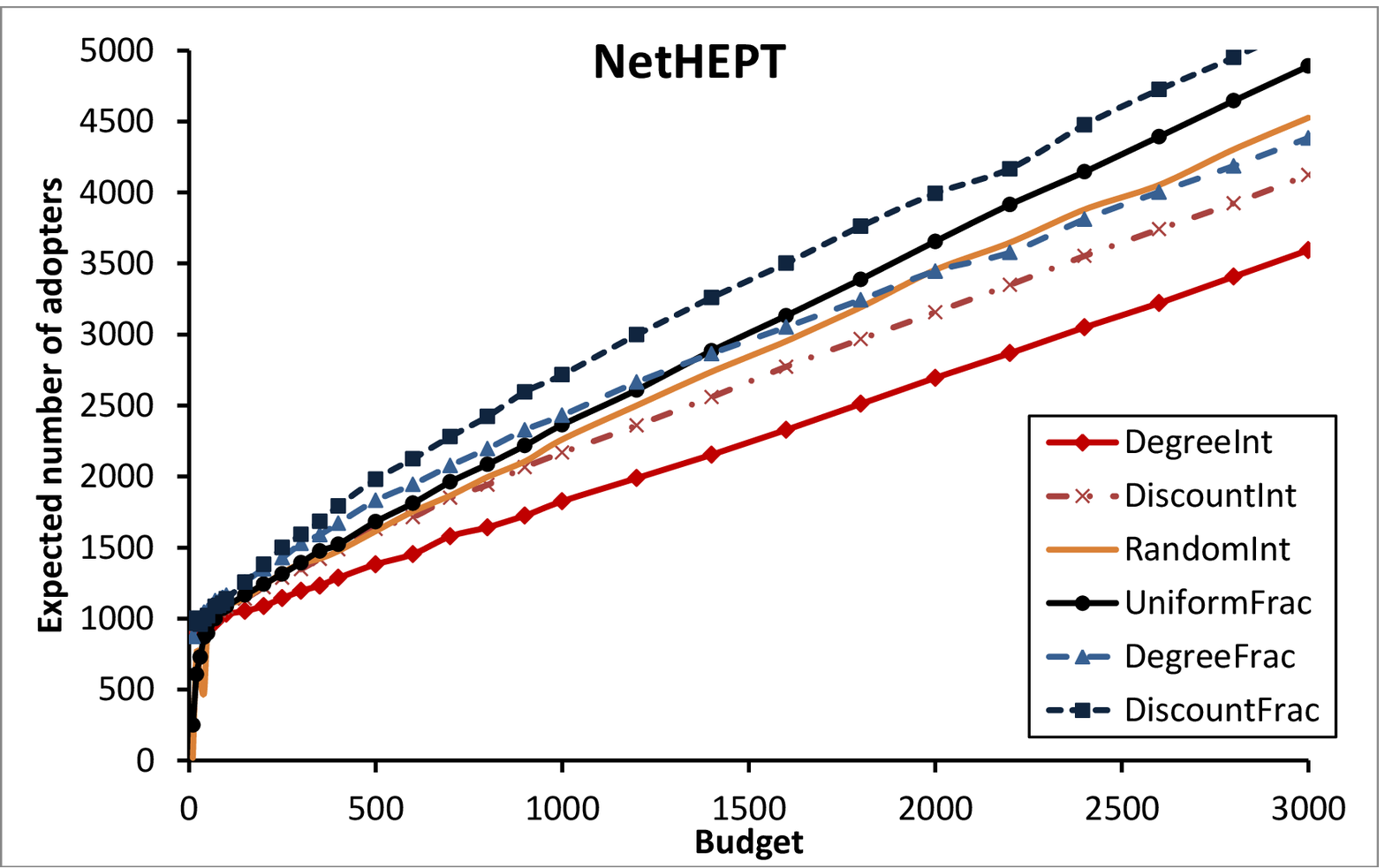} 
\label{fig:hept_TR0}
\end{subfigure}
~
\begin{subfigure}[b]{0.4\textwidth}
\includegraphics[width=\imw, height=\imh]{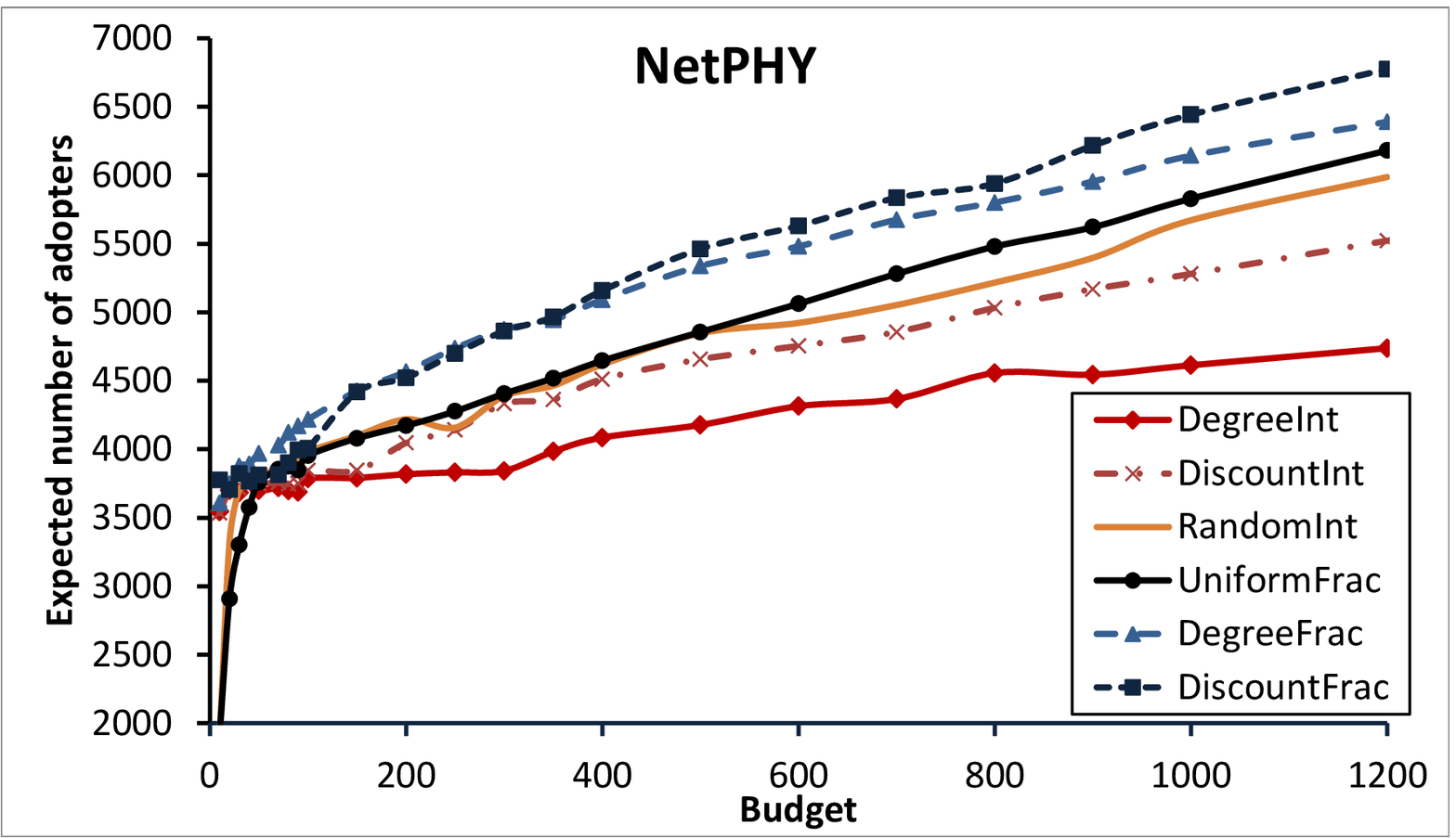} 
\label{fig:phy_TR0}
\end{subfigure}
~
\begin{subfigure}[b]{0.4\textwidth}
\includegraphics[width=\imw, height=\imh]{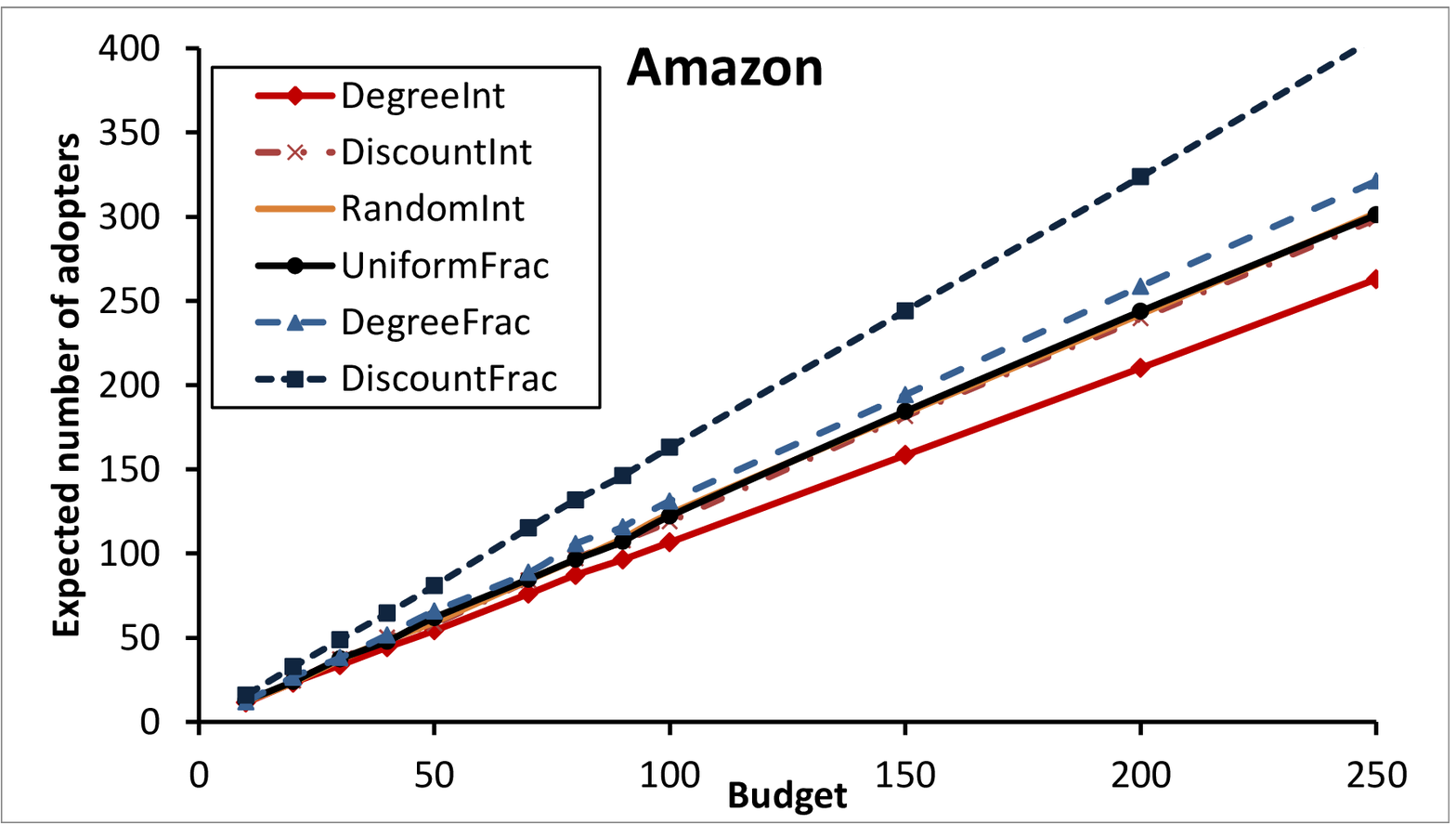} 
\label{fig:amazon_TR0}
\end{subfigure}
\caption{Performance of different algorithms on \facebook, \hept, \phy, and \amazon. The weights of edges are defined based on the TRIVALENCY model. The $x$ axis is the budget and the $y$ axis is the expected number of adopters.}
\label{fig:TR0}
\end{figure*}

{\bf Algorithms. }
We compare the following algorithms in this study. The first three algorithms are for the integral influence model, and the last three algorithms work for the fractional influence model.

\begin{itemize}
\item {\degreeInt:} A simple greedy algorithm which selects nodes with the largest degrees. This method was used by Kempe et al.~\cite{kkt03} and Chen et al.~\cite{Chen2009a} as well.

\item {\discountInt:} A variant of \degreeInt\ which selects node $u$ with the highest degree in each step. Moreover, after adding node $u$ to the seed set, the algorithm decreases the degrees of neighbors of $u$ by~$1$. This method was proposed and evaluated by Chen et al.~\cite{Chen2009a}.


\item {\randomAlg:} This algorithm randomly adds $B$ nodes to the seed set, i.e., by spending $1$ on each of them. We use this algorithm as a baseline in our comparisons. Other works \cite{kkt03, Chen2009a, Chen2010} also use this algorithm as a baseline.

\item {\degreeFrac:} This algorithm selects each node fractionally proportional to its degree. In particular, this algorithm spends 
on node $i$ where $B$ is the budget, $\D_i$ is the out-degree of node $i$, and $m$ is the total number of edges\footnote{If the graph is undirected, the cost is $2m$ instead of~$m$}.

\item {\discountFrac:}  A heuristic for the fractional case given by Algorithm~\ref{Alg:frac}. Let $\outD_v(A)$ be the total sum of the weight of edges from node $v$ to set $A$, and $\inD_v(A)$ be the total sum of the weight of edges from set $A$ to node $v$. 
This algorithm starts with an empty seed set $S$, and in each step it adds node $v \not \in S$ with the maximum $\outD_v(V-S)$ to seed set $S$ by spending $\max\{0, 1- \inD_v(S)\}$ on node $v$. Note that in each step the total influence from the current seed set $S$ to node $v$ is $\inD_v(S)$, and it is enough to spend $1- \inD_v(S)$ for adding node $v$ to the current seed set $S$. Note that no node would pay a positive amount, and the algorithm spends $\max\{0, 1- \inD_v(S)\}$ on node $v$.

\item {\uniformAlg:} This algorithm distributes the budget equally among all nodes. We use this algorithm as another baseline in our comparisons.
\end{itemize}


\begin{algorithm}
        \textbf{Input:} Graph $G=(V, E)$ and budget $B$ \\
        \textbf{Output:} Influencing vector $\mathbf{x}$ \\
\begin{algorithmic}[1]
\State $S \gets 0, b \gets B, \mathbf{x} \gets \mathbf{0}$
\While {$b > 0$}
\State $u \gets \argmax_{v \in V-S} \{\outD_v(V-S)\}$
\State $x_u \gets \min \{b, \max\{0, 1 - \inD_u(S)\}\}$
\State $b \gets b - x_u$
\State $S \gets S \cup \{ u \}$
\EndWhile
\State \Return $\mathbf{x}$
\end{algorithmic}
\caption{\discountFrac}
\label{Alg:frac}
\end{algorithm}

All these heuristic algorithms are fast and are designed for running on large real-world networks. In particular, algorithms \degreeInt\ and \degreeFrac\ only need the degree of nodes. 
We can use a Fibonacci heap to implement \discountInt, resulting in a running time of $O(B \log n +m )$.
Similarly, the running time of \discountFrac\ is $O(n \log n +m)$ using a Fibonacci heap.\footnote{In \discountFrac, the while loop (lines 4--9 of Algorithm~\ref{Alg:frac}) may run for $n$ steps even when budget $B$ is less than $n$. Hence, the running time is $O(\max\{n,B\} \log n +m)=O(n \log n+m)$ instead of $O(B \log n +m)$.}
Algorithms \randomAlg\ and \uniformAlg\ are linear-time algorithms.
It also has been shown that the performance of \discountInt\ almost
matches the performance of the greedy algorithm which maximizes a
submodular function \cite{Chen2009a}. Since the greedy algorithm
becomes prohibitively expensive to run for large networks, this
motivates us to use \discountInt\ as a reasonable benchmark for
evaluating the power of the integral influence model.

{\bf Results. }
We have implemented all algorithms in C++, and have run all experiences on a server with two 6-core/12-thread 3.46GHz Intel Xeon X5690 CPUs,
with 48GB 1.3GHz RAM. We run all of the aforementioned algorithms for finding the activation vector/set, and compute the performance of each algorithm by running 10,000 simulations and taking the average of the number of adopters.

We first examine the performance of a fractional activation vector in the {\em weighted cascade model}, where the weight of the edge from $u$ to $v$ is $\frac{1}{\D_v}$, where $\D_v$ is the in-degree of node~$v$. Note, the total weight of incoming edges of each node is $\sum_{uv}{w_{uv}}=\sum_{uv}\frac{1}{\D_v} = 1$. This model was proposed by Kempe et al.~\cite{kkt03}, and it has been used in the literature~\cite{Chen2010, Chen2009a, Chen2010a}.
See Figure \ref{fig:LT} for results.

We then compare the performance of various algorithms when the weight of edges are determined by the \emph{TRIVALENCY} model,
in which the weight of each edge is chosen uniformly at random from the set $\{0.001, 0.01, 0.1\}$.  Here $0.001$, $0.01$, and $0.1$ represent low, medium, and high influences.  In this model, the total sum of the weights of incoming edges of each node may be greater than~$1$.
This model and its variants have been used in \cite{kkt03,Chen2009a,Chen2010}.
We run all proposed algorithms on real-world networks when their weights are defined by  TRIVALENCY model. See Figure \ref{fig:TR0} for results. 


{\bf Discussion. }
In most of the plots, algorithms for the fractional 
influence model do substantially better than algorithms for the integral influence model. Overall, 
for most datasets, \discountFrac\ is the best algorithm, with the only exception being 
the Facebook dataset. As a simple metric of the power of the fractional model versus the 
integral model, we consider the pointwise performance gain of fractional model 
algorithms versus the integral model algorithms. i.e., for a given budget, we 
compute the ratio of expected number of adopters for the fractional model with 
the most adopters and the expected  number of adopters for the integral model algorithm 
with the most adopters. Depending on the dataset, we get a mean pointwise performace gain
between $3.4$\% (Facebook dataset, TRIVALENCY model) and $142.7$\% (Amazon 
dataset, weighted cascade model) with the mean being $31.5$\% and the median being 
$15.7$\% over all the datasets and both models (weighted cascade and TRIVALENCY).
Among the heuristics presented for the integral model, \discountInt\ is probably the
best. If we compare just it to its fractional adaptation, \discountFrac, we get a similar picture:
the range of average performace gain is 
between $9.1$\% (Facebook, TRIVALENT model) and $397.6$\% (Amazon, weighted cascade
 model) with a mean of $64.1$\% and a  median of $15.6$\%. 

In summary, the experimental results clearly demonstrate that the fractional model leads to a significantly higher number of adopters across a wide range of budgets on diverse datasets.


\bibliographystyle{plainnat}
\bibliography{fractionalkkt}  

\begin{thebibliography}{23}
\providecommand{\natexlab}[1]{#1}
\providecommand{\url}[1]{\texttt{#1}}
\expandafter\ifx\csname urlstyle\endcsname\relax
  \providecommand{\doi}[1]{doi: #1}\else
  \providecommand{\doi}{doi: \begingroup \urlstyle{rm}\Url}\fi

\bibitem[Akhlaghpour et~al.(2010)Akhlaghpour, Ghodsi, Haghpanah, Mirrokni,
  Mahini, and Nikzad]{akhlaghpour2010optimal}
Hessameddin Akhlaghpour, Mohammad Ghodsi, Nima Haghpanah, Vahab~S Mirrokni,
  Hamid Mahini, and Afshin Nikzad.
\newblock Optimal iterative pricing over social networks.
\newblock In \emph{WINE}, pages 415--423. 2010.

\bibitem[Arthur et~al.(2009)Arthur, Motwani, Sharma, and Xu]{arthur2009pricing}
David Arthur, Rajeev Motwani, Aneesh Sharma, and Ying Xu.
\newblock Pricing strategies for viral marketing on social networks.
\newblock In \emph{WINE}, pages 101--112. 2009.

\bibitem[Bikhchandani et~al.(1992)Bikhchandani, Hirshleifer, and
  Welch]{Bikhchandani1992}
Sushil Bikhchandani, David Hirshleifer, and Ivo Welch.
\newblock A theory of fads, fashion, custom, and cultural change as
  informational cascades.
\newblock \emph{Journal of Political Economy}, 100:\penalty0 992--1026, 1992.

\bibitem[Borgs et~al.(to appear in SODA 2014)Borgs, Brautbar, Chayes, and
  Lucier]{Borgs2012}
Christian Borgs, Michael Brautbar, Jennifer~T. Chayes, and Brendan Lucier.
\newblock Maximizing social influence in near optimal time.
\newblock to appear in SODA 2014.

\bibitem[Chen et~al.(2009)Chen, Wang, and Yang]{Chen2009a}
Wei Chen, Yajun Wang, and Siyu Yang.
\newblock Efficient influence maximization in social networks.
\newblock In \emph{KDD}, pages 199--208. ACM, 2009.

\bibitem[Chen et~al.(2010{\natexlab{a}})Chen, Wang, and Wang]{Chen2010}
Wei Chen, Chi Wang, and Yajun Wang.
\newblock Scalable influence maximization for prevalent viral marketing in
  large-scale social networks.
\newblock In \emph{KDD}, pages 1029--1038. ACM, 2010{\natexlab{a}}.

\bibitem[Chen et~al.(2010{\natexlab{b}})Chen, Yuan, and Zhang]{Chen2010a}
Wei Chen, Yifei Yuan, and Li~Zhang.
\newblock Scalable influence maximization in social networks under the linear
  threshold model.
\newblock In \emph{ICDM}, pages 88--97. IEEE, 2010{\natexlab{b}}.

\bibitem[Domingos and Richardson(2001)]{Domingos2001}
Pedro Domingos and Matt Richardson.
\newblock Mining the network value of customers.
\newblock In \emph{KDD}, pages 57--66. ACM, 2001.

\bibitem[Ehsani et~al.(2012)Ehsani, Ghodsi, Khajenezhad, Mahini, and
  Nikzad]{Ehsani2012118}
Shayan Ehsani, Mohammad Ghodsi, Ahmad Khajenezhad, Hamid Mahini, and Afshin
  Nikzad.
\newblock Optimal online pricing with network externalities.
\newblock \emph{Information Processing Letters}, 112\penalty0 (4):\penalty0 118
  -- 123, 2012.
\newblock ISSN 0020-0190.

\bibitem[Granovetter(1978)]{Granovetter1978}
Mark Granovetter.
\newblock Threshold models of collective behavior.
\newblock \emph{American journal of sociology}, 83:\penalty0 1420--1443, 1978.

\bibitem[Gunnec(2012)]{Gunnecthesis}
Dilek Gunnec.
\newblock \emph{Integrating social network effects in product design and
  diffusion}.
\newblock Ph.d. thesis, University of Maryland, College Park, 2012.

\bibitem[Gunnec and Raghavan(2012)]{GRpap1}
Dilek Gunnec and S.~Raghavan.
\newblock Integrating social network effects in the share-of-choice problem.
\newblock Technical report, University of Maryland, College Park, 2012.

\bibitem[Gunnec et~al.(2013)Gunnec, Raghavan, and Zhang]{GRZpap2}
Dilek Gunnec, S.~Raghavan, and Rui Zhang.
\newblock The least cost influence problem.
\newblock Technical report, University of Maryland, College Park, 2013.

\bibitem[Hartline et~al.(2008)Hartline, Mirrokni, and
  Sundararajan]{hartline2008optimal}
Jason Hartline, Vahab Mirrokni, and Mukund Sundararajan.
\newblock Optimal marketing strategies over social networks.
\newblock In \emph{WWW}, pages 189--198. ACM, 2008.

\bibitem[Kempe et~al.(2003)Kempe, Kleinberg, and Tardos]{kkt03}
David Kempe, Jon Kleinberg, and {\'E}va Tardos.
\newblock Maximizing the spread of influence through a social network.
\newblock In \emph{KDD}, pages 137--146. ACM, 2003.

\bibitem[Leskovec et~al.(2007)Leskovec, Krause, Guestrin, Faloutsos,
  VanBriesen, and Glance]{Leskovec2007}
Jure Leskovec, Andreas Krause, Carlos Guestrin, Christos Faloutsos, Jeanne
  VanBriesen, and Natalie Glance.
\newblock Cost-effective outbreak detection in networks.
\newblock In \emph{KDD}, pages 420--429. ACM, 2007.

\bibitem[Mossel and Roch(2007)]{Mossel2007}
Elchanan Mossel and Sebastien Roch.
\newblock On the submodularity of influence in social networks.
\newblock In \emph{STOC}, pages 128--134. ACM, 2007.

\bibitem[Mossel and Roch(2010)]{mr10}
Elchanan Mossel and S{\'e}bastien Roch.
\newblock Submodularity of influence in social networks: From local to global.
\newblock \emph{SIAM J. Comput.}, 39\penalty0 (6):\penalty0 2176--2188, 2010.

\bibitem[Richardson and Domingos(2002)]{Richardson2002}
Matthew Richardson and Pedro Domingos.
\newblock Mining knowledge-sharing sites for viral marketing.
\newblock In \emph{KDD}, pages 61--70. ACM, 2002.

\bibitem[Rogers(2010)]{Rogers2010}
Everett~M Rogers.
\newblock \emph{Diffusion of innovations}.
\newblock Simon and Schuster, 2010.

\bibitem[Schelling(2006)]{Schelling2006}
Thomas~C Schelling.
\newblock \emph{Micromotives and macrobehavior}.
\newblock WW Norton \& Company, 2006.

\bibitem[Singer(2012)]{Singer12}
Yaron Singer.
\newblock How to win friends and influence people, truthfully: influence
  maximization mechanisms for social networks.
\newblock In \emph{Proceedings of the fifth ACM international conference on Web
  search and data mining}, pages 733--742. ACM, 2012.

\bibitem[Valente(1995)]{Valente1995}
Thomas~W Valente.
\newblock \emph{Network models of the diffusion of innovations (Quantitative
  methods in communication series)}.
\newblock Hampton Press, 1995.

\end{thebibliography}

\end{document}